\newenvironment{proof}[1]{\textbf{Proof. }#1}{\hfill $\Box$}
\newtheorem{theorem}{Theorem}
\newtheorem{lemma}{Lemma} 
\newtheorem{proposition}{Proposition}
\newtheorem{remark}{Remark}
\newtheorem{example}{Example}
\newcommand{\be}{\begin{equation}}
\newcommand{\ee}{\end{equation}}
\newcommand{\ba}{\begin{array}}
\newcommand{\ea}{\end{array}}
\newcommand{\rr}{\mathbb{R}}
\newcommand{\nn}{\mathbb{N}}
\def\cC{{\mathcal C}}
\def\cM{{\mathcal M}}
\def\cB{{\mathcal B}}
\def\cS{{\mathcal S}}
\def\cZ{{\mathcal Z}}
\def\cP{{\mathcal P}}
\def\cV{{\mathcal V}}
\def\cW{{\mathcal W}}
\def\cAS{{\Pi}}
\def\rad{\mbox{rad}}
\newcommand{\ds}{\displaystyle}
\newcommand{\no}[1]{}
\title{Adaptive Threshold Selection for Set Membership State Estimation with Quantized Measurements} 
\date{\today}
\author{Marco Casini, Andrea Garulli, Antonio Vicino}
\begin{document}

\normalem  

\maketitle

\begin{abstract}
State estimation for discrete-time linear systems with quantized measurements is addressed. By exploiting the set-theoretic nature of the information provided by the quantizer, the problem is cast in the set membership estimation setting. Motivated by the possibility of suitably tuning the quantizer thresholds in sensor networks, the optimal design of adaptive quantizers is formulated in terms of the minimization of the radius of information associated to the state estimation problem. The optimal solution is derived for first-order systems and the result is exploited to design adaptive quantizers for generic systems, minimizing the size of the feasible output signal set. Then, the minimum number of sensor thresholds for which the adaptive quantizers guarantee asymptotic boundedness of the state estimation uncertainty is established. Threshold adaptation mechanisms based on several types of outer approximations of the feasible state set are also proposed. The effectiveness of the designed adaptive quantizers is demonstrated on numerical tests involving a specific case study and randomly generated systems, highlighting the trade off between the resulting estimation uncertainty and the computational burden required by recursive set approximations.
\end{abstract}

\section{Introduction}

The effect of quantization in control systems has been investigated since long time (see, e.g., \cite{Cur70,nfze07} and references therein). In particular, it is well known that adaptive quantizers can be employed to improve the performance of the control loop \cite{bl00,em01,ps01,as08}. 
Due to the steadily increasing diffusion of networked systems with limited communication rate, the role of quantization has been intensively studied also in system identification \cite{wyzz10,wkc09,ggamw11,cgv12b,cpr13,bhp17} and state estimation \cite{wb97,sw00,fds09,zwm16}.

In networked systems, such as wireless sensor networks or industrial control networks, quantization thresholds can be adapted to reduce estimation errors and improve the system performance.
In fact, while the thresholds of physical sensors are often difficult to trigger, quantization for communication allows the user 
to choose how to partition the signal range in subsets, which corresponds to suitably select the quantizer thresholds \cite[Chapter 1]{wyzz10}. This has motivated a growing interest in the design of adaptive quantizers for system identification \cite{agy07,wyzz08,tsu09,myf13,you15} and distributed sensing \cite{pwo01,fl08}.
In \cite{agy07}, adaptation is performed by using a noise-shaping coder, a solution widely employed in delta-sigma data converters \cite{dsp15}. Time-varying thresholds are obtained in \cite{pwo01} by adding a control input before sensor quantization. Three different threshold adaptation schemes for distributed parameter estimation are proposed in \cite{fl08}. They require the solution of an online optimization problem, while simpler recursive adaptation schemes are proposed in \cite{myf13,you15}.  

In this paper, the problem of designing adaptive quantizers will be addressed in the context of set membership state estimation for discrete-time linear systems. The aim is to find an adaptation scheme for the sensor thresholds which minimizes the set-theoretic uncertainty associated to the state estimates.  

\subsection{Related work}

Set membership state estimation has a long history, whose origin dates back to more that 50 years ago \cite{S68,BR71}. The main idea is as follows: when the system is affected by unknown-but-bounded (UBB) process disturbance and measurement noise, it is possible to define the so-called \emph{feasible state set}, i.e., the set of all the states that are compatible with the available input-output measurements and the disturbance bounds. An advantage of this approach is that all the elements of the feasible set can be equivalently considered as pointwise estimates of the state, depending on the specific application at hand. Different measures of the size of the feasible set can be used to quantify the uncertainty affecting the estimate. On the other hand, a major problem in set membership estimation is that feasible sets rapidly become too complex to be computed exactly. Even in the case of linear dynamic systems, feasible sets are usually polytopes with an increasing number of facets and vertices. This has generated a rich literature in which researchers have proposed a number of techniques for approximating the feasible sets. The common idea underlying these approaches is to choose a family of sets of limited complexity and compute a set within the family that contains the true feasible set. Popular approximating sets incluse ellipsoids \cite{gnkh96,dwp01,ec01}, parallelotopes \cite{CGZ96}, zonotopes \cite{abc05,Com15, wwpc19,ar21} and constrained zonotopes \cite{srmb16,rk22,rlrr22}. The interested reader is referred to \cite{drt22} for a detailed survey on the subject.

The effect of measurement quantization in state estimation has been mostly addressed in the stochastic framework. Since the seminal work in \cite{wb97}, the objective has been to design the quantizer parameters so that the estimation error possesses some desirable properties (typically, a bounded covariance). 
By leveraging the set-theoretic nature of the information provided by quantized sensor measurements, a natural alternative is to cast the state estimation problem in a deterministic setting. 
Recently, this approach has been proposed for state estimation based on binary sensor measurements, in the presence of UBB disturbances, leading to deterministic bounds on the estimation errors \cite{bcg17,zcy20}. 
A further step is to formulate the state estimation problem in the set membership estimation framework, which allows to characterize the uncertainty associated to the state estimates in terms of feasible state sets. Quite interestingly, the connection with the set membership approach was already present in \cite{sw00}, which however considered a very specific system class (cascaded integrators). 
A set-valued observer for linear systems with quantized measurements was first proposed in \cite{hgw04}.
In \cite{zol18} a set membership state estimation algorithm is presented, assuming that quantized measurements of all the state variables are available and that such measurements always remain in the quantizer range. In all these works, the quantizer thresholds are fixed a priori. In \cite{cgv23}, a strategy for adapting the threshold of a binary sensor to minimize the size of the feasible state set is proposed.

\subsection{Paper contribution}

In this paper, the state estimation problem with quantized measurements is addressed for linear discrete-time systems affected by UBB disturbances. 
The problem is cast in the set membership estimation framework. Assuming that the thresholds of the quantizer can be adapted online, the optimal design of the quantizer parameters (range and resolution) is addressed, in order to minimize the \emph{radius of information}, i.e., the radius of the feasible state set corresponding to the worst-case realization of the quantized measurements. To the best of our knowledge, this problem has never been studied in a set-theoretic framework.
First, a complete solution is provided for first-order systems. Then, such a result is exploited to design adaptive quantizers for general $n$-th order systems, by minimizing the radius of the feasible output signal set, namely, the set of feasible values of the (unaccessible) output signal. Moreover, it is shown that such a design guarantees asymptotic boundedness of the feasible state set, and hence of the uncertainty affecting the state estimate, provided that the quantizer has a sufficient number of thresholds. 
The threshold adaptation mechanism is also applied to outer approximations of the feasible state sets, provided by recursive approximation algorithms based on parallelotopes, zonotopes and constrained zonotopes. In particular, analytical expressions of the time-varying thresholds are derived for parallelotopes and zonotopes, thus making the computational burden of online adaptation negligible. The effectiveness of the proposed techniques for designing and adapting the quantizer parameters is demonstrated via numerical simulations on a specific case study and on randomly generated systems with multiple process disturbances and quantized measurements.

Summing up, the contributions of the paper are:
\begin{itemize}
\item[i)] the optimal design of adaptive quantizers for set membership state estimation of linear systems with quantized  measurements;
\item[ii)] a result showing that optimal adaptive quantizers guarantee asymptotic boundedness of the feasible state set, even in the case of unstable systems; 
\item[iii)] an effective and computationally viable way to select the quantizer thresholds, based on commonly employed outer approximations of the feasible state set;
\item[iv)] empirical evidence, based on extensive numerical simulations, that the proposed approach outperforms the results obtained by set membership estimators based on fixed thresholds.
\end{itemize}

The rest of the paper is organized as follows. Section \ref{sec:preliminaries} introduces some preliminary material on set manipulations. The optimal threshold selection problem is formulated in Section \ref{sec:problem_formulation}. The problem solution is derived in Section \ref{sec:1st_order} for first-order systems, and then extended to linear systems of arbitrary dimension in Section \ref{sec:nth_order}.
A threshold selection mechanism based on outer approximations of the feasible state set is illustrated in Section \ref{sec:approx}. Numerical simulations are presented in Section \ref{sec:numex}, while Section \ref{sec:conc} contains some concluding remarks and suggests possible future developments.

\section{Definitions and preliminaries}\label{sec:preliminaries}

$D=\mbox{diag}(v)$ is a diagonal matrix with diagonal equal to $v$. The column vector of ones is $\mathds{1}=[1~ 1 \dots 1]'$.
\\
Given sets $\cV,\cW$ and a matrix $A$, we define the set operations
\begin{eqnarray}
\cV+\cW&=&\{z:~ z=v+w,~ v \in \cV, ~w \in \cW\},\\
A\cV&=&\{z:~ z=Av,~ v \in \cV\}.
\end{eqnarray}
We denote by $\|v\|_p$ the $p$-norm of $v$ (we omit the subscript when $p=2$). $\cB_\infty$ is the unit ball in the infinity norm, i.e. $\cB_\infty=\{x:~\|x\|_\infty \leq 1\}$.
\\
The radius of a set $\cV$ is defined as
\begin{equation}
\mbox{rad}(\cV)= \inf_z \sup_{v\in\cV} \| z-v \|.
\label{eq:rad}
\end{equation}
The argument $\hat{z}$ of the infimum in \eqref{eq:rad} is the center of $\cV$.
\\
Given $p\in\rr^n$ and $c \in\rr$, a \emph{strip} is defined as 
\begin{equation}
\label{eq:strip}
\cS(p,c)=\{x\in\rr^n:~|p'x-c| \leq 1\}.
\end{equation}
A non-degenerate \emph{parallelotope} is the intersection of $n$ linearly independent strips. Given $c\in\rr^n$ and a nonsingular matrix $P\in\rr^{n \times n}$, it is defined as
\begin{equation}
\cP(P,c)=\{x\in\rr^n:~\|Px-c\|_\infty\leq 1\}=\{x\in\rr^n:~x=x_c+T\alpha, ~\|\alpha\|_\infty\leq 1\}
\label{eq:parallelotope}
\end{equation}
where $T=P^{-1}$ and $x_c=P^{-1}c$.
Its radius is given by
\begin{equation}
\rad({\cP(P,c)}) = \max_{w \in \cB_\infty} \| P^{-1} w \|.
\label{eq:radpar}
\end{equation}
Moreover, for any vector $v\in\rr^n$, it holds
\begin{equation}
\sup_{x \in\cP} v'x - \inf_{x \in\cP} v'x = 2 \|v'P^{-1}\|_1.
\label{eq:projpar}
\end{equation}
A \emph{zonotope} is an extension of a parallelotope defined as
\begin{equation}
\cZ(T,x_c)=\{x\in\rr^n:~x=x_c+T\alpha, ~\|\alpha\|_\infty\leq 1\}
\label{eq:zonotope}
\end{equation}
where $T\in\rr^{n\times m}$ and $\alpha \in\rr^m$. The columns of $T$ are the generators of the zonotope and its order is $o=\frac{m}{n}$.
\\
A \emph{constrained zonotope} is an extension of a zonotope defined as
\begin{equation}
\cC \cZ(T,x_c,A,b)=\{x\in\rr^n:~x=x_c+T\alpha, ~\|\alpha\|_\infty\leq 1,~A\alpha=b\}
\label{eq:constrzonotope}
\end{equation}
where $A\in\rr^{n_c\times m}$ and $b \in\rr^{n_c}$. The order of a constrained zonotope is defined as $o=\frac{m-n_c}{n}$, where $n_c$ is the number of equality constraints in \eqref{eq:constrzonotope}.
\begin{proposition}
\label{prop1}
Given a strip $\cS(p,c) \subset \rr^n$, a nonsingular matrix $A \in\rr^{n\times n}$, a matrix $G\in\rr^{n \times m}$ and a constant $\gamma>0$
\begin{equation}
A\cS(p,c)+\gamma G\cB_\infty = \{x\in\rr^n: ~|p'A^{-1} x -c| \leq 1+\gamma \|p'A^{-1}G\|_1\}.  
\label{eq:prop}
\end{equation}
\end{proposition}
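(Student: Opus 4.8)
The plan is to prove the set identity \eqref{eq:prop} by establishing the two inclusions separately, after first rewriting the image $A\cS(p,c)$ of the strip in a convenient form. Since $A$ is nonsingular, the change of variable $x=Az$ yields
\[
A\cS(p,c)=\{Az:~|p'z-c|\le 1\}=\{x\in\rr^n:~|p'A^{-1}x-c|\le 1\},
\]
so the image of a strip is again a strip, now with normal direction $(A^{-1})'p$. Throughout I abbreviate $q'=p'A^{-1}$, so that the claimed right-hand side is $\{x:~|q'x-c|\le 1+\gamma\|q'G\|_1\}$ and the relevant scalar is $h:=\gamma\|q'G\|_1=\max_{\|\beta\|_\infty\le 1}\gamma\,q'G\beta$, i.e.\ the half-width contributed by the zonotopic term $\gamma G\cB_\infty$.

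For the inclusion ``$\subseteq$'' I take an arbitrary point $x=u+\gamma G\beta$ with $|q'u-c|\le 1$ and $\|\beta\|_\infty\le 1$, apply the functional $q'$, and use the triangle inequality together with H\"older's inequality for the dual pair $(\|\cdot\|_1,\|\cdot\|_\infty)$:
\[
|q'x-c|\le|q'u-c|+\gamma|q'G\beta|\le 1+\gamma\|q'G\|_1\|\beta\|_\infty\le 1+h .
\]
This direction is routine and requires no construction.

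The reverse inclusion ``$\supseteq$'' is the crux of the argument and the step I expect to be the main obstacle, since it requires exhibiting an explicit decomposition. Given $x$ with $s:=q'x-c$ satisfying $|s|\le 1+h$, I must produce $u$ in the strip and $\beta$ with $\|\beta\|_\infty\le 1$ such that $x=u+\gamma G\beta$. If $|s|\le 1$ the trivial choice $\beta=0$, $u=x$ works. Otherwise I set $\beta=t\,\mathrm{sign}\big((q'G)'\big)$ with the sign taken entrywise, so that $\gamma q'G\beta=t\,h$, and I choose the scalar $t=(s-\mathrm{sign}(s))/h$; the bounds $1<|s|\le 1+h$ guarantee $t\in[-1,1]$, hence $\|\beta\|_\infty=|t|\le 1$. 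With this choice $u:=x-\gamma G\beta$ satisfies $q'u-c=s-t\,h=\mathrm{sign}(s)$, so that $|q'u-c|=1$ and $u\in A\cS(p,c)$, completing the decomposition. The lone degenerate case $\|q'G\|_1=0$ (in which $h=0$ and the division is undefined) is immediate, because then the right-hand side collapses to $A\cS(p,c)$ itself. I anticipate that the only delicate bookkeeping is checking that the constructed $t$ stays within $[-1,1]$ and that $q'u$ lands exactly on the strip boundary; everything else follows directly from the definitions and from the $1$/$\infty$ norm duality already exploited in the first inclusion.
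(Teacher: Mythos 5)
Your proof is correct and takes essentially the same approach as the paper's: the forward inclusion via the triangle inequality and $(\|\cdot\|_1,\|\cdot\|_\infty)$ duality, and the reverse inclusion by constructing a disturbance aligned with the entrywise signs of $(p'A^{-1}G)'$ and scaled so that the remaining point lands back in the strip --- precisely the paper's choice $\bar{w}_i=\mbox{sign}\{(p'A^{-1}G)_i\}\,\max\{p'A^{-1}x-c-1,0\}/(\gamma \|p'A^{-1}G\|_1)$, with your $\mbox{sign}(s)$ factor handling both sign cases at once. Your explicit treatment of the degenerate case $\|p'A^{-1}G\|_1=0$ is a minor point of extra care that the paper's division by this quantity silently assumes away.
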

\begin{proof}
See Appendix \ref{app:prop1}.
\end{proof}

\section{Problem formulation}\label{sec:problem_formulation}

Consider the discrete-time linear system
\begin{eqnarray}
x(k+1) &=& Ax(k) +Gw(k) \label{eq:sys1}\\
z(k)&=&Cx(k)+v(k) \label{eq:sys2}
\end{eqnarray}
where $x(k)\in\rr^n$ is the state vector, $z(k)\in\rr$ is the (inaccessible) output signal, $w(k)\in\rr^m$ is the process disturbance and $v(k)\in\rr$ is the output noise. A single output is assumed only to simplify the presentation; the results can be easily extended to multi-output systems.
The process disturbance $w(k)$ and the output noise $v(k)$ are unknown-but-bounded (UBB) sequences, i.e. they satisfy
\begin{eqnarray}
&& \|w(k)\|_\infty  \leq  \delta_w \vspace*{1mm} \label{eq:ubbw}\\
&& |v(k)|  \leq  \delta_v
\label{eq:ubbv}
\end{eqnarray}
for all $k \geq 0$.

Observations of the output signal are obtained from a quantized sensor with $d$ thresholds $\tau_1,\ldots,\tau_d$, such that 
\begin{equation}
\label{eq:quant}%
y(k)=\sigma(z(k))\triangleq\left\{%
\begin{array}{ll}
0 & \mbox{ ~if $z(k)\le \tau_1$}\\
1 & \mbox{ ~if $\tau_1<z(k)\le \tau_2$}\\
& ~~\vdots  \\
d & \mbox{ ~if $z(k)>\tau_d$}
\end{array}
\right.%
\end{equation}
The quantization is assumed to be uniform in the range $[\tau_1,\tau_d]$, i.e., the sensor  thresholds satisfy
$\tau_{i+1}-\tau_{i}=\Delta$, $i=1,\ldots,d-1$, for some $\Delta>0$. For simplicity of exposition, we assume $d$ is odd (hence, the number of possible outcomes of $y(k)$ is even). By setting $c=(d+1)/2$, the quantizer is defined by choosing the central threshold $\tau_c$ and the resolution $\Delta$. Hence, the quantizer is fully characterized by the parameter vector $\theta=[\tau_c~\Delta]'$.
Then, the thresholds can be expressed in terms of $\theta$ as
\begin{equation}
\tau_i=\tau_c-\left(\frac{d+1}{2}-i\right)\Delta~,~~~i=1,\dots,d.
\label{eq:thresh1}
\end{equation}

According to the set membership estimation paradigm, the information provided by the quantized measurement $y(k)$ is captured by the \emph{feasible measurement set}, i.e., the set of states $x(k)$ which are compatible with $y(k)$.
By using \eqref{eq:sys2} and \eqref{eq:ubbv}, this is given by
\begin{equation}
\begin{array}{rlll}
\cM(k) =&
 \{ x\in\rr^n~: & Cx  \le \tau_1 + \delta_v &~\mbox{ if } y(k)=0;\\
&&  \tau_1-\delta_v< Cx  \le \tau_2 + \delta_v &~\mbox{ if } y(k)=1;\\
&&\vdots &\\[1mm]
&&  Cx> \tau_d-\delta_v   &~\mbox{ if } y(k)=d \}.
\end{array}
\label{eq:cMk}
\end{equation}
The set $\cM(k)$ is a strip in the state space when $1 \leq y(k) \leq d-1$, and a half-space when $y(k)=0$ or $y(k)=d$ (notice that this occurs even in the noiseless case $\delta_v=0$). This justifies the set membership approach to the state estimation problem.

The \emph{feasible state set} $\Xi(k|k)$, i.e. the set of all the state vectors $x(k)$ which are compatible with the quantized measurements collected up to time $k$, is defined by the recursion 
\begin{eqnarray}
\Xi(k|k) &=& \Xi(k|k-1) \bigcap \cM(k) \label{eq:fss1}\\
\Xi(k+1|k) &=& A \Xi(k|k) +\delta_wG\cB_\infty \label{eq:fss2}
\end{eqnarray}
which is initialized by the set $\Xi(0|-1)$, containing all the feasible initial states $x(0)$. Hereafter, we will denote $\rho(k)=\rad(\Xi(k|k))$. 
By choosing the center of $\Xi(k|k)$ as a pointwise estimate of the state $x(k)$, $\rho(k)$ represents the maximum uncertainty associated to the estimate.

If the parameter vector $\theta$ of the quantizer is constant, the state estimation problem may be in general ill-posed. In fact, it may happen that $\cM(k) \supseteq \Xi(0|-1)$ for all $k \geq 0$, thus meaning that the initial uncertainty is never reduced, or that the feasible set tends to become asymptotically unbounded, as in the following example.
\begin{example}
Let $n=1$, $A=G=C=1$ , $\delta_w=1$, $\delta_v=0$. Assume $x(0)=\frac{1}{2}$ and $w(k)=1$, $\forall k \geq 0$, so that $z(k)=x(k)=\frac{1}{2}+k$. Consider the quantizer in \eqref{eq:quant} with $d=3$, $\tau_1=-1$, $\tau_2=0$, $\tau_3=1$. Then, $y(k)=3$, $\forall k \geq 1$, and hence $\cM(k)=\{x:~x>1\}$. By choosing $\Xi(0|-1)=[0,1]$ and applying the recursion \eqref{eq:fss1}-\eqref{eq:fss2}, one gets the sequence of feasible state sets $\Xi(k|k)=[1,k]$ which is asymptotically unbounded.
\end{example}

For the reasons exposed above, it is of interest to adapt the quantizer parameters in order to minimize the size of the feasible state set. 
Let us assume that at each time $k$ it is possible to select the parameters of the quantizer $\theta(k)=[\tau_c(k)~\Delta(k)]'$. The aim is to choose $\theta(k)$ online in order to minimize the size of the resulting feasible set $\Xi(k|k)$. Since $\cM(k)$ depends on the actual value taken by the output $y(k)$, which in turn depends on the feasible predicted states $\Xi(k|k-1)$ and the realization of the measurement noise $v(k)$, a meaningful objective is to minimize the radius of $\Xi(k|k)$ with respect to the worst-case value taken by $y(k)$, i.e., to select
\begin{equation}
\theta^*(k) = \arg \inf_{\theta} \sup_{y(k)} \rho(k;\theta,y(k))
\label{eq:thsel}
\end{equation}
where $\rho(k;\theta,y(k)) = \rad \left(\Xi(k|k-1) \bigcap \cM(k) \right)$, in which the dependence of the radius on both $\theta$ and $y(k)$, through $\cM(k)$ in \eqref{eq:cMk}, have been made explicit.  
Notice that in the information-based complexity framework, this corresponds to trigger the quantizer in order to minimize the so-called \emph{radius of information} \cite{TWW88,MV91a}. In the next section, the solution of problem \eqref{eq:thsel} is derived for first-order systems.

\section{First-order systems}
\label{sec:1st_order}

Let $n=m=1$. Without loss of generality, consider the system
\begin{eqnarray}
x(k+1) &=& ax(k) +w(k) \label{eq:sys1d1}\\
z(k)&=& x(k)+v(k) \label{eq:sys1d2}
\end{eqnarray}
with $|w(k)| \leq \delta_w$, $|v(k)| \leq \delta_v$, $\forall k \geq 0$.
Moreover, let us assume $a>0$. If $\Xi(0|-1)\subset \rr$ is an interval, then all feasible sets $\Xi(k|k)$ and $\Xi(k+1|k)$ are also intervals. Hence, let us adopt the notation $\Xi(k|k)=[l(k),r(k)]$, whose radius is $\rho(k)=(r(k)-l(k))/2$. 

The next theorem provides a solution of problem \eqref{eq:thsel}.

\begin{theorem}
\label{th:thm1}
Consider system \eqref{eq:sys1d1}-\eqref{eq:sys1d2} with the quantized sensor \eqref{eq:quant}. If $a\rho(k-1)+\delta_w > \delta_v$, a solution of problem \eqref{eq:thsel}  is given by
\begin{eqnarray}
\tau_c(k)&=& \ds a\, \frac{l(k-1)+r(k-1)}{2}, \label{eq:Cck}\\
\Delta(k)&=& 
\frac{2}{d+1}\left( a\rho(k-1)+\delta_w-\delta_v\right). \label{eq:Deltak}
\end{eqnarray}
When $a\rho(k-1)+\delta_w \leq \delta_v$, any $\theta(k)$ with $\tau_c(k)$ given by \eqref{eq:Cck}  and $\Delta(k)>0$ is an optimal solution of \eqref{eq:thsel}.
\end{theorem}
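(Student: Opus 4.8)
The plan is to exploit the one-dimensionality. Since $a>0$ and $\Xi(k-1|k-1)=[l(k-1),r(k-1)]$ is an interval, the prediction \eqref{eq:fss2} gives the interval $\Xi(k|k-1)=[m-\bar\rho,\,m+\bar\rho]$ with center $m=a\,(l(k-1)+r(k-1))/2$ and radius $\bar\rho=a\rho(k-1)+\delta_w$. Each feasible measurement set $\cM(k)$ in \eqref{eq:cMk} is a strip or a half-line in $\rr$, so every candidate $\Xi(k|k)=\Xi(k|k-1)\cap\cM(k)$ is again an interval whose radius equals half its length. Problem \eqref{eq:thsel} thus reduces to choosing $\theta=[\tau_c~\Delta]'$ so as to minimize, over all outcomes $y(k)$ producing a nonempty intersection, the largest half-length. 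A first fact to record is that an outcome $y(k)=i$ is achievable precisely when $\cM(k)\cap\Xi(k|k-1)\neq\emptyset$ (using $z(k)=x(k)+v(k)$ with $|v(k)|\le\delta_v$), so the supremum in \eqref{eq:thsel} is exactly a maximum over the cells that overlap the predicted interval.

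For the achievability (upper) bound I would set $\tau_c=m$, so that the thresholds are symmetric about the predicted center, $\tau_1=m-(d-1)\Delta/2$ and $\tau_d=m+(d-1)\Delta/2$. With this choice, any inner outcome $1\le y(k)\le d-1$ whose strip lies inside $\Xi(k|k-1)$ yields a feasible set of half-length $\Delta/2+\delta_v$ (increasing in $\Delta$), while the outer outcome $y(k)=0$ yields $[m-\bar\rho,\,\tau_1+\delta_v]$ of half-length $\tfrac12\big(\bar\rho+\delta_v-(d-1)\Delta/2\big)$ (decreasing in $\Delta$), and symmetrically for $y(k)=d$. Balancing the increasing and decreasing terms, which cross because $\bar\rho>\delta_v$, gives exactly $\Delta^*=\tfrac{2}{d+1}(\bar\rho-\delta_v)$ as in \eqref{eq:Deltak}, with common value $\rho(k)=\Delta^*/2+\delta_v=(\bar\rho+d\delta_v)/(d+1)$. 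Using $\bar\rho>\delta_v$ I would then verify that at $\Delta^*$ every inner strip is genuinely contained in $\Xi(k|k-1)$ and both outer half-lines genuinely overlap it, so that all $d+1$ cells are achievable and share this radius; hence the proposed $\theta^*(k)$ attains worst-case radius $(\bar\rho+d\delta_v)/(d+1)$.

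The core of the argument is the matching lower bound, which must hold for every $\theta$, not only symmetric ones. For an arbitrary quantizer let the achievable cells be the consecutive indices $s,\dots,t$ and set $R=\sup_{y(k)}\rho(k;\theta,y(k))$, so each feasible interval has length at most $2R$. Writing $b_i$ for the right endpoint of the feasible interval associated with $y(k)=i$, consecutive feasible intervals overlap by $2\delta_v$ (two neighboring strips share the band of width $2\delta_v$ around the common threshold), whence $b_{i+1}\le b_i+(2R-2\delta_v)$; together with $b_s\le (m-\bar\rho)+2R$ and $b_t=m+\bar\rho$, telescoping over the $N\le d+1$ achievable cells yields $2\bar\rho\le 2RN-2\delta_v(N-1)$, i.e. $R\ge \delta_v+(\bar\rho-\delta_v)/N\ge(\bar\rho+d\delta_v)/(d+1)$. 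This meets the value achieved above, so $\theta^*(k)$ solves \eqref{eq:thsel}. The main obstacle I anticipate is the boundary bookkeeping: the step $b_{i+1}\le b_i+(2R-2\delta_v)$ uses that the relevant thresholds are interior to $\Xi(k|k-1)$, so I would handle the possible clipping of the extreme cells by telescoping only up to the first cell whose feasible interval reaches $m+\bar\rho$, which still leaves at most $d+1$ terms and preserves the bound.

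Finally, the degenerate regime $a\rho(k-1)+\delta_w\le\delta_v$, i.e. $\bar\rho\le\delta_v$, I would treat directly. For any quantizer the cell containing the center $m$ has, after the $\delta_v$ enlargement in \eqref{eq:cMk}, an extent that already covers the whole predicted interval $[m-\bar\rho,\,m+\bar\rho]$ (its boundaries are pushed out by $\delta_v\ge\bar\rho$ on each side), so the corresponding achievable outcome yields $\Xi(k|k)=\Xi(k|k-1)$ of radius $\bar\rho$. Since no intersection can exceed $\Xi(k|k-1)$, the worst-case radius equals $\bar\rho$ for every $\theta$; hence every admissible choice is optimal, in particular any $\theta(k)$ with $\tau_c(k)$ as in \eqref{eq:Cck} and $\Delta(k)>0$, which proves the second assertion.
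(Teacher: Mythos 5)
Your proof is correct, but it reaches the conclusion by a genuinely different route than the paper. The paper first pins down $\tau_c(k)$ at the center of $\Xi(k|k-1)$ by an informal geometric symmetry argument, and then performs the min--max explicitly \emph{within that restricted family}: it derives the worst-case radius as a piecewise function of the single scalar $\Delta(k)$ (its expressions \eqref{eq:suprhok1} and \eqref{eq:suprhok2}) and minimizes it. You instead prove optimality by a matching pair of bounds: achievability (with $\tau_c=m$ and $\Delta^*=\tfrac{2}{d+1}(\bar\rho-\delta_v)$, writing $\bar\rho=a\rho(k-1)+\delta_w$, all $d+1$ cells are achievable and balanced at radius $(\bar\rho+d\delta_v)/(d+1)$, which I have checked), and a universal lower bound $R\geq \delta_v+(\bar\rho-\delta_v)/(d+1)$ obtained by telescoping the right endpoints of consecutive feasible intervals, which overlap by $2\delta_v$. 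Each approach buys something: the paper's explicit piecewise formulas are reused verbatim to obtain the radius recursion \eqref{eq:rhok2} of Theorem \ref{th:thm2}, so its computation is not wasted; your lower bound, by contrast, holds for \emph{every} $\theta$, so it makes rigorous what the paper only asserts by symmetry (that an off-center $\tau_c(k)$ cannot do better), and since the telescoping never uses equal spacing it actually proves the stronger fact that the centered uniform quantizer is worst-case optimal among all $d$-threshold quantizers, uniform or not. Your treatment of the degenerate case $\bar\rho\leq\delta_v$ is likewise more complete: you show every $\theta$ attains worst-case radius exactly $\bar\rho$, whereas the paper only evaluates the centered quantizer.

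One bookkeeping point to tighten: clipping can spoil the telescoping step at the \emph{left} end as well, not only at the right end you mention. If two consecutive achievable cells both have feasible intervals clipped at $m-\bar\rho$ (i.e., $\tau_i-\delta_v<m-\bar\rho$ and $\tau_{i+1}-\delta_v<m-\bar\rho$), the inequality $b_{i+1}\leq b_i+2R-2\delta_v$ can fail, since $b_i=\tau_{i+1}+\delta_v$ may sit barely above $m-\bar\rho$ while $b_{i+1}$ can reach $m-\bar\rho+2R$. The symmetric version of your own fix repairs this: start the telescoping at the \emph{last} cell whose feasible interval is clipped at $m-\bar\rho$ (it still satisfies $b\leq m-\bar\rho+2R$), and stop at the first cell reaching $m+\bar\rho$; every intermediate step is then unclipped on both sides. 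Shortening the chain to $N'\leq d+1$ cells only strengthens the bound, because $\delta_v+(\bar\rho-\delta_v)/N'$ is decreasing in $N'$ when $\bar\rho>\delta_v$, so the final inequality $R\geq(\bar\rho+d\delta_v)/(d+1)$ survives intact.
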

\begin{proof}
By applying \eqref{eq:sys1d1} to $\Xi(k-1|k-1)$, one has 
\begin{equation}
\begin{array}{rcl}
\Xi(k|k-1)&=&[a l(k-1)-\delta_w, a r(k-1)+\delta_w]=\\
&=& [l(k|k-1),r(k|k-1)].
\end{array}
\label{eq:predXi}
\end{equation} 
Problem \eqref{eq:thsel} requires to select $\theta(k)$ in order to minimize the worst-case radius of the interval $\Xi(k|k)=\Xi(k|k-1) \bigcap \cM(k)$, for all possible values taken by the quantized output signal $y(k)$. 
By using \eqref{eq:cMk} one has
\begin{equation}
\Xi(k|k) =
\left\{
\begin{array}{l}
[l(k|k-1),\min\{\tau_1+\delta_v,r(k|k-1)\}] \hspace*{2cm}~\mbox{ if } y(k)=0;\\   
{[}\max\{l(k|k-1),\tau_i-\delta_v\},\min\{\tau_{i+1}+\delta_v,r(k|k-1)\}] \\
\hspace*{5.8cm} \mbox{ if } y(k)=i, ~1\leq i \leq d-1; \vspace*{-2mm}\\~
\vdots \\
{[}\max\{l(k|k-1),\tau_d-\delta_v\},r(k|k-1)]   \hspace*{2cm}~\mbox{ if } y(k)=d .
\end{array}
\right.
\label{eq:Xikk1}
\end{equation}
By adopting a simple geometric argument, the thresholds of the quantizer should be symmetrically distributed over $\Xi(k|k-1)$. Therefore, the central threshold $\tau_c$ must coincide with the center of $\Xi(k|k-1)$, from which \eqref{eq:Cck} follows. 
Now, substituting \eqref{eq:thresh1} and \eqref{eq:Cck} into \eqref{eq:Xikk1}, after some manipulations one obtains 
\begin{equation}
\rho(k) =
\left\{
\begin{array}{l}
\min\left\{a\rho(k-1)+\delta_w,~ \frac{1}{2} \left(a \rho(k-1)+\delta_w+\delta_v-\frac{d-1}{2}\Delta(k) \right)\right \} \\
\hspace*{8cm}~\mbox{ if } y(k)\in\{0,d\};\\   
\min\left\{\frac{1}{2}\Delta(k) +\delta_v,~ a\rho(k-1)+\delta_w, \right. \vspace*{1mm}\\ 
\hspace{0.9cm} \left. \frac{1}{2} \left[a \rho(k-1) +\delta_w+\delta_v -\left( \left| \frac{d}{2}-y(k) \right| -\frac{1}{2} \right) \Delta(k) \right]\right\} \\
\hspace*{6.7cm}~\mbox{ if } y(k)\in\{1,\dots,d-1\}.\\   
\end{array}
\right.
\label{eq:rhok1}
\end{equation}
If $a\rho(k-1)+\delta_w \leq \delta_v$, for $y(k)=\frac{d-1}{2}$ one has $\rho(k)=a \rho(k-1) +\delta_w$. It is easy to check that for all other outcomes of $y(k)$, smaller values of $\rho(k)$ are obtained in \eqref{eq:rhok1}, hence
\begin{equation}
\sup_{y(k)} \rho(k) = a \rho(k-1) +\delta_w.
\label{eq:suprhok1}
\end{equation}
Since the expression in \eqref{eq:suprhok1} does not depend on $\Delta(k)$, any value of this parameter is optimal.
\\
Conversely, if $a\rho(k-1)+\delta_w \geq \delta_v$, from \eqref{eq:rhok1} one gets
\begin{equation}
\rho(k) =
\left\{
\begin{array}{l}
\frac{1}{2} \left(a \rho(k-1)+\delta_w+\delta_v -\frac{d-1}{2}\Delta(k) \right) 
\hspace*{2.1cm}~\mbox{ if } y(k)\in\{0,d\};
\vspace*{2mm}\\   
\min\left\{\frac{1}{2}\Delta(k) +\delta_v, \right. \vspace*{1mm}\\ 
\hspace{0.9cm} \left. \frac{1}{2} \left[a \rho(k-1) +\delta_w+\delta_v -\left( \left| \frac{d}{2}-y(k) \right| -\frac{1}{2} \right) \Delta(k) \right]\right\} \\
\hspace*{6.7cm}~\mbox{ if } y(k)\in\{1,\dots,d-1\}.\\   
\end{array}
\right.
\end{equation}
By applying some straightforward manipulations, one obtains
\begin{equation}
\sup_{y(k)} \rho(k) =
\left\{
\begin{array}{l}
\frac{1}{2} \left(a \rho(k-1)-\frac{d-1}{2}\Delta(k) +\delta_w+\delta_v\right) \vspace*{1mm}\\
\hspace*{4.0cm}~\mbox{ if } \Delta(k) \leq \frac{2}{d+1}\left( a\rho(k-1)+\delta_w-\delta_v\right) \vspace*{2mm} \\
\frac{1}{2}\Delta(k) +\delta_v \\
\hspace*{0.2cm}~\mbox{ if }  \frac{2}{d+1}\left( a\rho(k-1)+\delta_w-\delta_v\right) \leq \Delta(k) \leq  a\rho(k-1)+\delta_w-\delta_v \vspace*{2mm}\\
\frac{1}{2} \left(a \rho(k-1) +\delta_w+\delta_v\right) \vspace*{1mm}\\
\hspace*{5.0cm}~\mbox{ if }  \Delta(k) \geq  a\rho(k-1)+\delta_w-\delta_v
\end{array}
\right.
\label{eq:suprhok2}
\end{equation}
whose minimum with respect to the the quantizer resolution $\Delta(k)$ is attained for $\Delta(k) = \frac{2}{d+1} ( a\rho(k-1)+\delta_w-\delta_v )$, which corresponds to \eqref{eq:Deltak}.
\end{proof}

The rational behind the quantizer adaptation \eqref{eq:Cck}-\eqref{eq:Deltak} is clarified by the following example.
\begin{example}
Let $a=2$, $\delta_w=1$, $\delta_v=2$. Assume that at time $k\!-\!1$ one has $\Xi(k-1|k-1)=[-5,~5]$. Then, from \eqref{eq:predXi} one gets $\Xi(k|k-1)=[-11,~11]$. By applying \eqref{eq:Cck}-\eqref{eq:Deltak} with $d=5$, one has $\tau_c(k)=0$ and $\Delta(k)=3$. The resulting thresholds $\tau_i(k)$ in \eqref{eq:thresh1} are equal to $-6,-3,0,3,6$. It is worth stressing that such thresholds do not partition the predicted feasible set $\Xi(k|k-1)$ in equal parts. Rather, they are selected in such a way that the corrected feasible set $\Xi(k|k)=\Xi(k|k-1) \bigcap \cM(k)$ has always the same radius, equal to $\frac{1}{2}\Delta(k)+\delta_v=\frac{7}{2}$, irrespectively of the sensor output $y(k)$, thus achieving the optimal solution of problem \eqref{eq:thsel}.
\end{example}

By using the result in Theorem \ref{th:thm1} it is possible to establish conditions for the asymptotic boundedness of the feasible set $\Xi(k|k)$. 
\begin{theorem}
\label{th:thm2}
By choosing the quantizer parameters as in \eqref{eq:Cck}-\eqref{eq:Deltak}, the radius of the feasible set $\Xi(k|k)$ evolves according to
\begin{equation}
\rho(k)=\left\{
\begin{array}{ll}
a \rho(k-1) +\delta_w & \mbox{ if } \rho(k-1) \leq \frac{\delta_v-\delta_w}{a}, \vspace*{2mm}\\
\frac{1}{d+1} \left(a \rho(k-1) +\delta_w+d\delta_v \right) & \mbox{ if } \rho(k-1) \geq \frac{\delta_v-\delta_w}{a}.
\end{array}
\right.
\label{eq:rhok2}
\end{equation}
Moreover, if $d>a-1$, one has
\begin{equation}
\lim_{k \rightarrow +\infty} \rho(k)=\left\{
\begin{array}{ll}
\frac{1}{1-a} \delta_w & \mbox{ if } \delta_w \leq (1-a) \delta_v, \vspace*{2mm}\\
\frac{1}{d+1-a} \left(\delta_w+d \delta_v \right) & \mbox{ else,}
\end{array}
\right.
\label{eq:limrhok}
\end{equation}
and if  $\delta_w > (1-a) \delta_v$,
\begin{equation}
\lim_{k \rightarrow +\infty} \Delta(k)=
\frac{2}{d+1-a} \left(\delta_w+(a-1) \delta_v \right) .
\label{eq:limDelta}
\end{equation}
\end{theorem}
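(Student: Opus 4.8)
The plan is to prove the statement in three stages: establishing the recursion \eqref{eq:rhok2}, analyzing the resulting scalar dynamics to obtain \eqref{eq:limrhok}, and finally deducing \eqref{eq:limDelta}.

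First I would obtain \eqref{eq:rhok2} directly from the proof of Theorem \ref{th:thm1}. When $a\rho(k-1)+\delta_w \leq \delta_v$ (equivalently $\rho(k-1) \leq \frac{\delta_v-\delta_w}{a}$, since $a>0$), equation \eqref{eq:suprhok1} already gives $\rho(k)=a\rho(k-1)+\delta_w$. When $a\rho(k-1)+\delta_w \geq \delta_v$, I would substitute the optimal resolution \eqref{eq:Deltak} into \eqref{eq:suprhok2}: at this value of $\Delta(k)$ the first two branches of \eqref{eq:suprhok2} coincide, and a short simplification collapses the expression to $\frac{1}{d+1}\left(a\rho(k-1)+\delta_w+d\delta_v\right)$. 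This yields the second branch of \eqref{eq:rhok2}.

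Next I would view the right-hand side of \eqref{eq:rhok2} as a map $\rho(k)=f(\rho(k-1))$ and study its scalar dynamics. The key structural facts are that $f$ is continuous (both branches equal $\delta_v$ at the breakpoint $\rho_b=\frac{\delta_v-\delta_w}{a}$), nondecreasing, and concave, being piecewise affine with slopes $a$ and $\frac{a}{d+1}$; the hypothesis $d>a-1$ guarantees $\frac{a}{d+1}<1$ and $d+1-a>0$. Solving $f(\rho)=\rho$ on each branch gives the two candidate fixed points $\frac{\delta_w}{1-a}$ and $\frac{\delta_w+d\delta_v}{d+1-a}$. The decisive observation is that the first lies in the domain $\rho\le \rho_b$ of the first branch precisely when $\delta_w\le(1-a)\delta_v$, while the second lies in the domain $\rho\ge\rho_b$ of the second branch precisely in the complementary case; this is exactly the dichotomy in \eqref{eq:limrhok}, and in each case the admissible fixed point is unique on $[0,\infty)$. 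To conclude convergence I would use that, since $f$ is nondecreasing, the sequence $\{\rho(k)\}$ is monotone, and since the map eventually acts with slope $\frac{a}{d+1}<1$ it cannot diverge; a monotone bounded sequence converges, and by continuity its limit is the unique admissible fixed point.

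The main obstacle is the convergence argument in the unstable regime $a\ge1$ (contained in the case $\delta_w>(1-a)\delta_v$), where the first branch of $f$ is expanding, so the recursion is not a global contraction. Here I would rely on concavity: writing $g(\rho)=f(\rho)-\rho$, the slope of $g$ is nonincreasing and strictly negative on the second branch, so $g$ changes sign exactly once, from positive to negative, at the fixed point; this, together with monotonicity of $f$, forces every trajectory to be driven into the contracting branch and to converge monotonically to $\frac{\delta_w+d\delta_v}{d+1-a}$, irrespective of the initial radius and even when the open-loop system is unstable. Some care is also needed with the sign of the breakpoint $\rho_b$ (negative when $\delta_w>\delta_v$, in which case only the second branch is ever active). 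Finally, \eqref{eq:limDelta} follows by substituting the limit $\rho^\ast=\frac{\delta_w+d\delta_v}{d+1-a}$ into \eqref{eq:Deltak} and simplifying, which reduces the numerator to $(d+1)\left(\delta_w+(a-1)\delta_v\right)$ and cancels the factor $d+1$.
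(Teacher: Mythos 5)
Your proposal is correct and follows essentially the same route as the paper: it derives the recursion \eqref{eq:rhok2} by substituting \eqref{eq:Deltak} into \eqref{eq:suprhok1}--\eqref{eq:suprhok2}, analyzes the resulting scalar piecewise-affine dynamics to get \eqref{eq:limrhok}, and obtains \eqref{eq:limDelta} by substitution. The only difference is that where the paper invokes ``standard arguments on the asymptotic behavior'' of \eqref{eq:rhok2}, you spell them out explicitly (monotonicity, concavity, the fixed-point dichotomy matching the condition $\delta_w \lessgtr (1-a)\delta_v$, and the treatment of the expanding branch when $a\geq 1$), all of which checks out.
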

\begin{proof}
Expression \eqref{eq:rhok2} is obtained from \eqref{eq:suprhok1} and \eqref{eq:suprhok2}, by substituting \eqref{eq:Deltak} into \eqref{eq:suprhok2}. Then, \eqref{eq:limrhok} follows from standard arguments on the asymptotic behavior of system \eqref{eq:rhok2}. Finally, \eqref{eq:limDelta} is obtained by substituting the second expression in \eqref{eq:limrhok} into \eqref{eq:Deltak} (note that when the first expression holds, the choice of $\Delta(k)$ becomes asymptotically irrelevant).
\end{proof}

Theorem \ref{th:thm2} states that in order to guarantee asymptotic boundedness of the feasible state set, the number of thresholds of a uniform quantizer must exceed $a-1$. Notice that if system \eqref{eq:sys1d1} is stable, this condition is always satisfied. Conversely, in the special case of a binary sensor ($d=1$), asymptotic boundedness holds if and only if $a<2$.

\begin{remark} 
If $a<0$, Theorem \ref{th:thm2} still holds by replacing $a$ with $|a|$ in \eqref{eq:rhok2}-\eqref{eq:limDelta}.
If $d$ is even, the results in Theorems \ref{th:thm1} and \ref{th:thm2} do not change, except that there will be no threshold coinciding with the center of $\Xi(k|k-1)$. However, the optimal resolution $\Delta$ in \eqref{eq:Deltak} clearly remains the same. The optimal threshold locations (for any $d$) are obtained by substituting \eqref{eq:Cck}-\eqref{eq:Deltak} in \eqref{eq:thresh1}, thus obtaining
\begin{equation}
\tau_i(k)=\ds a\, \frac{l(k-1)+r(k-1)}{2} - \left(1- \frac{2i}{d+1} \right) \left( a\rho(k-1)+\delta_w-\delta_v\right),
\label{eq:thresh2}
\end{equation}
for $i=1,\dots,d$.
\end{remark}

\begin{remark} 
The result \eqref{eq:limrhok} in Theorem \ref{th:thm2} can be seen as the set-theoretic counterpart of that in \cite[Th.~3]{wb97} for a stochastic estimation framework. In fact, the cited result provides the minimum number $n$ of bits for a coder-estimator guaranteeing asymptotic boundedness of the covariance of the state estimation error. Specifically, under the assumption that all the involved stochastic disturbances have a finite-support density function, it is shown that $2^n > a$ implies asymptotic stability of a coder estimator for a first-order system with pole $a$. By observing that an $n$-bit coder corresponds to a quantizer with $d=2^n-1$ thresholds, the condition $d>a-1$ is recovered. In addition, Theorem \ref{th:thm2} provides exact asymptotic expressions for the estimation uncertainty $\rho(k)$ and the quantizer resolution $\Delta(k)$, in the considered set-theoretic framework.
\end{remark}

\section{General case}
\label{sec:nth_order}

The solution of problem \eqref{eq:thsel} in the general case of $n$-th order systems is intractable, because it involves a nonconvex min-max optimization over $n$-dimensional sets. Nevertheless, we can leverage the solution for first-order systems, in order to solve an alternative formulation of the problem, which is both meaningful and computationally feasible.

Instead of minimizing the radius of the feasible state set at time $k$, the aim is to minimize the radius of the \emph{feasible output signal set}, i.e., the set of feasible values that the nominal output $\bar{z}(k)=Cx(k)$ can take at time $k$. This is given by
\begin{equation}
C\Xi(k|k)=C\left(\Xi(k|k-1)\bigcap \cM(k) \right) = C\Xi(k|k-1)\bigcap C\cM(k) 
\label{eq:foss}
\end{equation}
where the last equality follows from the definition of $\cM(k)$ in \eqref{eq:cMk}. 
Then, one can design the quantizer parameters according to
\begin{equation}
\begin{array}{rcl}
\theta^*(k) &=&\ds \arg \inf_{\theta} \sup_{y(k)} \rad (C\Xi(k|k)) \\
&=& \ds \arg \inf_{\theta} \sup_{y(k)} \rad \left(C\Xi(k|k-1) \bigcap C\cM(k) \right).  
\end{array}
\label{eq:thseln}
\end{equation}
From a geometric viewpoint, the difference between problems \eqref{eq:thsel} and \eqref{eq:thseln} is that in the latter it is the projection of the uncertainty set onto the  output measurement manifold which is minimized. 

Since $C\Xi(k|k)$ in \eqref{eq:foss} is a one-dimensional interval, one can solve problem \eqref{eq:thseln} by adopting the same strategy as in the case of first-order systems. 
In particular, by using \eqref{eq:predXi}, the optimal quantizer parameters in Theorem \ref{th:thm1} can be rewritten as
\begin{eqnarray}
\tau_c(k)&=& \frac{l(k|k-1)+r(k|k-1)}{2}, \label{eq:Cck_th1}\\
\Delta(k)&=& \frac{r(k|k-1)-l(k|k-1)-2\delta_v}{d+1}, \label{eq:Deltak_th1}
\end{eqnarray}
where $l(k|k-1)$ and $r(k|k-1)$ are, respectively, the infimum and the supremum of the interval $\Xi(k|k-1)$, namely the predicted feasible state set.
Similarly to \eqref{eq:Cck_th1}-\eqref{eq:Deltak_th1}, for an $n$-th oder system one can choose the quantizer parameters as
\begin{eqnarray}
\tau_c(k)&=& \frac{\bar{l}(k)+\bar{r}(k)}{2}, \label{eq:Cckn}\\
\Delta(k)&=& \frac{\bar{r}(k)-\bar{l}(k)-2\delta_v}{d+1}, \label{eq:Deltakn}
\end{eqnarray}
where
\begin{equation}
\bar{l}(k)=\inf_{x\in\Xi(k|k-1)} Cx~,~~~ \bar{r}(k)=\sup_{x\in\Xi(k|k-1)} Cx.
\label{eq:barlbarr}
\end{equation}
In words, this choice corresponds to selecting the sensor thresholds in such a way that the feasible output signal set in \eqref{eq:foss} turns out to have the same radius, irrespectively of the value taken by the output $y(k)$, thus leading to the optimal solution of problem \eqref{eq:thseln}.

A question that naturally arises is whether the feasible state set is asymptotically bounded when the quantizer is adapted according to \eqref{eq:Cckn}-\eqref{eq:barlbarr}.
First, it can be observed that this property trivially holds when system \eqref{eq:sys1} is asymptotically stable. Indeed, even in the least informative scenario in which no measurement set $\cM(k)$ is able to reduce the size of the feasible set, namely $\cM(k) \supseteq \Xi(k|k-1)$, $\forall k \geq 0$, from \eqref{eq:fss1}-\eqref{eq:fss2} one has 
\begin{equation}
\label{eq:xinomeas}
\Xi(k+1|k+1) =  A \Xi(k|k) + \delta_wG \cB_\infty.
\end{equation}
Then, if $A$ is asymptotically stable, it turns out that $\Xi(k|k)$ is asymptotically included in the bounded set $\sum_{i=0}^{+\infty} A^i  \delta_wG \cB_\infty$ (see, e.g., \cite{kg98,bm08}).

When the stability assumption is not enforced, the following general result holds.

\begin{theorem}
\label{th:thresh}
Assume matrix $A$ is nonsingular and system \eqref{eq:sys1}-\eqref{eq:sys2} is observable. Define the matrix
$$
\Omega=\left[ \begin{array}{l} C \\ CA^{-1} \\ \vdots \\ CA^{-n+1} \end{array} \right]
$$
and the vector $\alpha=CA\,\Omega^{-1}$. Let $\eta^*>0$ be such that the polynomial 
\begin{equation}
q(z)=z^n-\eta\sum_{i=1}^n |\alpha_i|z^{n-i}
\label{eq:polyz}
\end{equation}
has all the roots strictly inside the unit circle, for all $\eta\in[0,\eta^*)$.
Then, if the quantizer parameters are chosen according to \eqref{eq:Cckn}-\eqref{eq:barlbarr}, with $d\in\nn$ such that
\begin{equation}
d > \frac{1}{\eta^*} -1
\label{eq:condp}
\end{equation}
then the feasible state set $\Xi(k|k)$ is asymptotically bounded, i.e., there exists $\rho_\infty>0$ such that $\limsup_{k \rightarrow +\infty} \rho(k) \leq \rho_\infty$.
Moreover, for every $d$ satisfying \eqref{eq:condp}, one has
\begin{equation}
\rho_\infty \leq \max_{w\in\cB_\infty} \| \Omega^{-1}D_\infty w\|
\label{eq:rhoinf}
\end{equation}
where 
\begin{equation}
D_\infty=\mbox{diag}\left(\left[\frac{1}{2} \Delta_\infty + \delta_v \right]  \mathds{1} +\delta_w \beta \right)
\label{eq:Dinf}
\end{equation}
with
\begin{equation}
\Delta_\infty = \frac{2}{d+1-\|\alpha\|_1} \left\{ \delta_v\left(\|\alpha\|_1-1\right) + \delta_w \left( \|CG\|_1+ \sum_{h=1}^n |\alpha_h| \beta_h \right) \right\}
\label{eq:Deltainf}
\end{equation}
and $\beta\in\rr^n$ is defined as
\begin{equation}
\beta_1=0~,~~~\beta_h = \sum_{i=0}^{h-2} \| CA^{-(i+1)} G\|_1~,~~h=2,\dots,n
\label{eq:betah}
\end{equation}
\end{theorem}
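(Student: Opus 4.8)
The plan is to use observability to trap $\Xi(k|k)$ inside an affine image of a box whose half-widths are dictated by the feasible output set radii over the last $n$ instants, and then to collapse the whole boundedness question onto a single scalar linear recursion for the quantizer resolution $\Delta(k)$, whose stability is governed precisely by the polynomial \eqref{eq:polyz}. The inequality \eqref{eq:condp} will turn out to be exactly the Schur condition for this recursion, and the quantities $\Omega$, $\alpha$, $\beta$ will emerge naturally from inverting the dynamics along a feasible trajectory.

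First I would record the scalar fact inherited from Theorem \ref{th:thm1}: with the adaptation \eqref{eq:Cckn}--\eqref{eq:barlbarr}, the feasible output set $C\Xi(j|j)$ is, for every outcome $y(j)$, an interval of radius at most $\varrho(j):=\tfrac12\Delta(j)+\delta_v$, centered at some $\mu(j)\in\rr$. Next, fixing $k\ge n-1$ and taking any $\xi\in\Xi(k|k)$, there is a feasible trajectory ending at $\xi$, so that $x(k-j)\in\Xi(k-j|k-j)$ for $j=0,\dots,n-1$, and inverting the dynamics gives $x(k-j)=A^{-j}\xi-\sum_{i=1}^{j}A^{-(j-i+1)}Gw(k-i)$. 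Left-multiplying by $C$, bounding the measurement part by $|Cx(k-j)-\mu(k-j)|\le\varrho(k-j)$ and the disturbance part by $\delta_w\sum_{i=1}^{j}\|CA^{-(j-i+1)}G\|_1=\delta_w\beta_{j+1}$ (with $\beta$ as in \eqref{eq:betah}), yields $|CA^{-j}\xi-\mu(k-j)|\le\varrho(k-j)+\delta_w\beta_{j+1}$. Stacking these $n$ inequalities against the rows of $\Omega$ gives $\Omega\xi-\mu_k\in D(k)\cB_\infty$, where $\mu_k=[\mu(k),\dots,\mu(k-n+1)]'$ and $D(k)=\mbox{diag}(\varrho(k-h+1)+\delta_w\beta_h)_{h=1}^n$. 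Hence $\Xi(k|k)\subseteq \Omega^{-1}\mu_k+\Omega^{-1}D(k)\cB_\infty$ and, by \eqref{eq:radpar}, $\rho(k)\le\max_{w\in\cB_\infty}\|\Omega^{-1}D(k)w\|$.

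I would then derive the recursion for $\Delta(k)$. The width of the predicted output set is $W(k)=\bar r(k)-\bar l(k)$, and since $\Xi(k|k-1)=A\Xi(k-1|k-1)+\delta_wG\cB_\infty$, projecting along $C$ and inserting the outer bound at time $k-1$ (using $\alpha=CA\Omega^{-1}$) gives $W(k)\le 2\sum_{h=1}^n|\alpha_h|\,d_h(k-1)+2\delta_w\|CG\|_1$, where $d_h(k-1)=\tfrac12\Delta(k-h)+\delta_v+\delta_w\beta_h$. Substituting this into $\Delta(k)=(W(k)-2\delta_v)/(d+1)$ produces
\[
\Delta(k)\le\frac{1}{d+1}\sum_{h=1}^n|\alpha_h|\,\Delta(k-h)+b,\qquad b=\frac{2}{d+1}\Big[\delta_v(\|\alpha\|_1-1)+\delta_w\big(\|CG\|_1+\textstyle\sum_{h}|\alpha_h|\beta_h\big)\Big].
\]
The homogeneous part has characteristic polynomial $z^n-\tfrac1{d+1}\sum_h|\alpha_h|z^{n-h}$, i.e. $q(z)$ of \eqref{eq:polyz} with $\eta=\tfrac1{d+1}$; condition \eqref{eq:condp} is exactly $\tfrac1{d+1}<\eta^*$, so $q$ is Schur. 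Since the coefficients $|\alpha_h|$ are nonnegative, a comparison argument against the equality recursion gives $\limsup_k\Delta(k)\le\Delta_\infty$, the fixed point solving $\Delta_\infty=\tfrac{\|\alpha\|_1}{d+1}\Delta_\infty+b$, which is precisely \eqref{eq:Deltainf}. Feeding this back into $D(k)$ gives $\limsup_k D(k)\le D_\infty$ of \eqref{eq:Dinf}, and the radius estimate above then yields \eqref{eq:rhoinf}, establishing asymptotic boundedness.

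I expect the main obstacle to be the first reduction: making the observability-to-box argument fully rigorous with consistent indexing of the backward trajectory, the accumulated disturbance coefficients $\beta_h$, and the optimal-quantizer radius property, since these must line up exactly for the $\Omega$, $\alpha$, $\beta$ of the statement to appear (in particular, verifying that the inverse-dynamics disturbance sum reindexes to $\beta_h$ as in \eqref{eq:betah} and that $\Omega$ is invertible because observability of $(C,A)$ with nonsingular $A$ implies observability of $(C,A^{-1})$). Secondary care is needed to justify the comparison inequality, which relies on $|\alpha_h|\ge0$, to dispose of the transient $k<n-1$ (irrelevant for the $\limsup$), and to address the degenerate regime in which no measurement shrinks the set and the computed $\Delta_\infty$ would be nonpositive—here boundedness still holds a fortiori, but the bound \eqref{eq:rhoinf} must be read with $\Delta_\infty$ replaced by an admissible nonnegative resolution.
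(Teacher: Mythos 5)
Your proposal is correct and follows essentially the same route as the paper's proof: the same observability-based parallelotope outer bound $\Xi(k|k)\subseteq \Omega^{-1}\mu_k+\Omega^{-1}D(k)\cB_\infty$ with diagonal entries $\tfrac12\Delta(k-h)+\delta_v+\delta_w\beta_{h+1}$, the same scalar recursion inequality $\Delta(k)\le\tfrac{1}{d+1}\sum_{h}|\alpha_h|\Delta(k-h)+b$, and the same Schur-stability/comparison argument identifying \eqref{eq:condp} with $\tfrac{1}{d+1}<\eta^*$ and yielding $\Delta_\infty$, $D_\infty$ and \eqref{eq:rhoinf}. The only cosmetic difference is that you establish the containment pointwise, by inverting the dynamics along a feasible trajectory, whereas the paper propagates the measurement strips $h$ steps ahead via its Proposition \ref{prop1} and intersects them; both derivations produce the identical bound.
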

\begin{proof}
See Appendix \ref{app:th_thresh}.
\end{proof}

Theorem \ref{th:thresh} provides an upper bound to the minimum number of thresholds for which asymptotic boundedness of the feasible state set is guaranteed. Such a bound is obtained by computing
$\eta^*=\sup\{\eta:~q(z) \mbox{~is Schur}\}$ and then choosing $d$ as the minimum integer strictly larger than $\frac{1}{\eta^*} -1$.
In general, the bound \eqref{eq:condp} may be conservative. However, there are cases in which it turns out to be quite tight, as in the following numerical examples.

\begin{example}
Consider system \eqref{eq:sys1}-\eqref{eq:sys2} with
$$
A=\left[ \begin{array}{cc} a & 1   \\ 0 & a \end{array} \right],~~~C=[1 ~~0 ],~~~G=I_2
$$
with $a \in \rr$, $\delta_w=0.05$ and $\delta_v=0$. The quantizer parameters are updated according to \eqref{eq:Cckn}-\eqref{eq:barlbarr}, for different values of $d$. Figure \ref{fig:ex2Dp} reports, for different values of $a$: (i) the minimum $d \in \nn$ for which the feasible set $\Xi(k|k)$, resulting from the worst-case realization of the quantized output $y(k)$, remains asymptotically bounded; (ii) the minimum $d \in \nn$ satisfying \eqref{eq:condp}. 
It can be observed that the upper bound given by Theorem \ref{th:thresh} is only slightly larger than the minimum number of thresholds for which the feasible state is actually bounded. In this example, the upper bound is tight for $a \geq 2.3$. 
\label{ex1}
\end{example}

\begin{figure}[pht]
   \centering
   \includegraphics[width=.7\columnwidth]{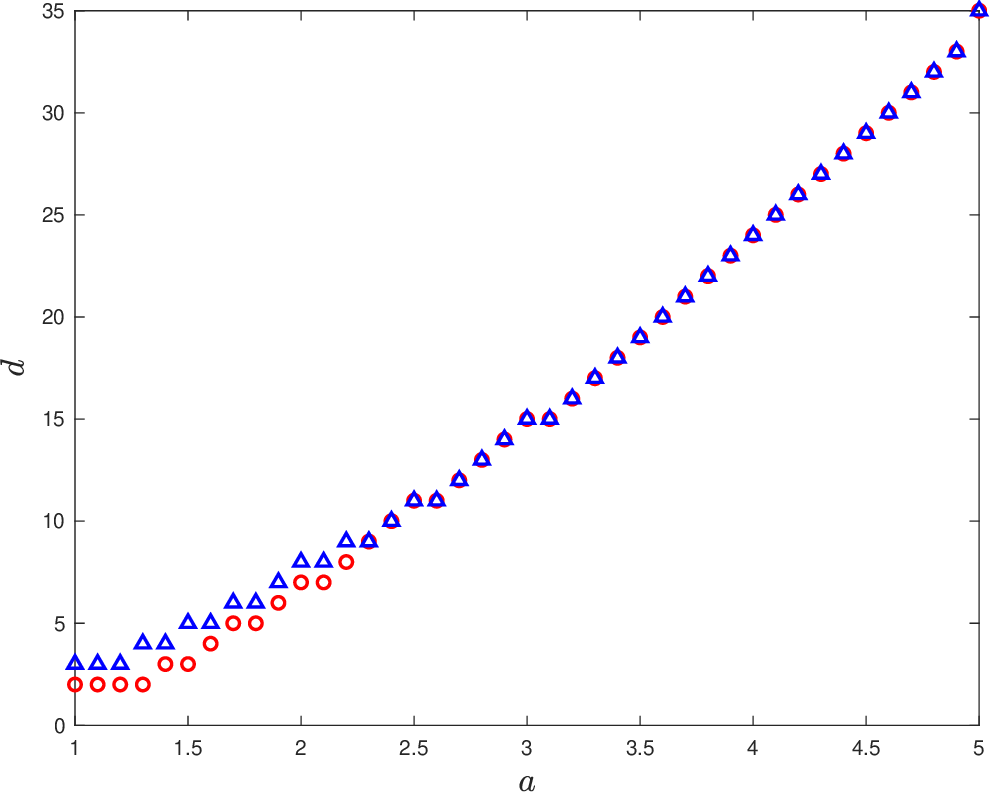}
   \caption{Example \ref{ex1}: minimum $d$ for which  the worst-case $\Xi(k|k)$ is asymptotically bounded (red circles) and minimum $d$ satisfying \eqref{eq:condp} (blue triangles), for different values of $a$.}
      \label{fig:ex2Dp}
 \end{figure}

\section{Threshold selection based on set approximations}
\label{sec:approx}

The main drawback of the threshold selection mechanism \eqref{eq:Cckn}-\eqref{eq:barlbarr} is that it requires the propagation of the true feasible state set $\Xi(k|k)$ according to the recursion \eqref{eq:fss1}-\eqref{eq:fss2}. Unfortunately, this is not computationally feasible, even for low state dimensions $n$. This is a well-known problem in the set membership estimation literature, which led to the development of a number of  set approximation techniques. 
A popular approach is based on the choice of a class of approximating sets $\cAS$, from which an outer approximation of the true feasible set is selected at each time $k$ through the recursion
\begin{eqnarray}
\cAS(k|k) &\supseteq& \cAS(k|k-1) \bigcap \cM(k) \label{eq:cS1}\\
\cAS(k+1|k) &\supseteq& A \cAS(k|k) + \delta_wG\cB_\infty \label{eq:cS2}
\end{eqnarray}
which is initialized by picking a $ \cAS(0|-1) \supseteq \Xi(0|-1)$.
Many different set classes have been considered in the literature, including ellipsoids \cite{gnkh96,dwp01,ec01}, parallelotopes \cite{CGZ96}, zonotopes \cite{abc05,Com15, wwpc19,rrsr20} and constrained zonotopes \cite{srmb16,rk22,rlrr22}.

When an outer approximation of the feasible set is available, one can design the quantizer thresholds by applying the ideas developed in Section \ref{sec:nth_order} to the approximating set $\cAS(k|k-1)$.
In particular,  $\tau_c(k)$ and $\Delta(k)$ can still be chosen according to \eqref{eq:Cckn}-\eqref{eq:Deltakn}, where $\bar{l}(k)$ and $\bar{r}(k)$ are computed as
\begin{equation}
\bar{l}(k)=\inf_{x\in\cAS(k|k-1)} Cx~,~~~ \bar{r}(k)=\sup_{x\in\cAS(k|k-1)} Cx.
\label{eq:barlbarrcS}
\end{equation}
Notice that $\bar{l}(k)$ and $\bar{r}(k)$ are easy to compute when the approximating sets $\cAS$ are parallelotopes or zonotopes. If $x_c(k|k-1)$ and $T(k|k-1)$ denote respectively the center and the generator matrix associated to $\cAS(k|k-1)$, according to the notation in \eqref{eq:parallelotope} and \eqref{eq:zonotope}, one gets
\begin{equation}
\bar{l}(k)=Cx_c(k|k-1)-\|CT(k|k-1)\|_1~,~~~ \bar{r}(k)=Cx_c(k|k-1)+\|CT(k|k-1)\|_1
\label{eq:barlbarrcS2}
\end{equation}
and hence \eqref{eq:Cckn}-\eqref{eq:Deltakn} boil down to
\begin{eqnarray}
\tau_c(k) &=& Cx_c(k|k-1), \label{eq:CcknZ} \\
\Delta(k) &=& \frac{2}{d+1} (\|CT(k|k-1)\|_1-\delta_v), \label{eq:DeltaknZ}
\end{eqnarray}
which can be easily computed at each time step.
When constrained zonotopes \eqref{eq:constrzonotope} are used as approximating sets, the computation of $\bar{l}(k)$ and $\bar{r}(k)$ in \eqref{eq:barlbarrcS} requires to solve two linear programs.

\section{Numerical examples}
\label{sec:numex}

In this section, the effectiveness of the threshold adaptation based on \eqref{eq:CcknZ}-\eqref{eq:DeltaknZ} is demonstrated on numerical examples.

\subsection{Case study: double oscillator}
\label{subsec:2osc}

We consider the double-spring mechanical system proposed in \cite{bcg17}, whose continuous-time equations are
$$
\begin{array}{rcl}
\dot{x}(t) &=& 
\left[ \begin{array}{rrrr} 
0 & 1 & 0 & 0 \\
-20 & 0 & 10 & 0\\
0 & 0 & 0 & 1 \\
10 & 0 & -10 & 0
\end{array}
\right]
x(t) 
+
\left[ \begin{array}{cc} 
0  & 0\\ 1 & 0 \\0 & 0 \\0 & 1
\end{array}
\right]
w(t)
\label{eq:sys1ex2} \vspace*{2mm}\\
z(t)&=&\left[ \begin{array}{cccc} 0 & 0 & 1 & 0 \end{array} \right]  \, x(t) + v(t) \label{eq:sys2ex2}
\end{array}
$$
The system is discretized with sampling time $0.1\,s$. The discrete-time disturbances $w(k)$ and $v(k)$ are generated as uniformly distributed white processes, satisfying \eqref{eq:ubbw} and \eqref{eq:ubbv}, with $\delta_w=0.2$ and $\delta_v=0.05$. The initial state is generated randomly within a box of side 5, i.e., $x(0) \in 5 \cB_\infty$.

In order to test the threshold selection mechanism based on outer approximation of the feasible state sets, illustrated in Section \ref{sec:approx}, we consider the following recursive set approximation algorithms.
\begin{itemize}
\item $Par$, that uses the recursive parallelotopic approximation proposed in \cite{CGZ96}, in which the intersection step \eqref{eq:cS1} is performed according to the minimum volume criterion devised in \cite{VZ96}.
\item $Zon(o_z)$, employing zonotopes with maximum order equal to $o_z$. The recursive approximation procedure is the one proposed in \cite{abc05}. The intersection step in \eqref{eq:cS1} is performed according to the minimum volume criterion illustrated in \cite{BAC06}. If the number of generators exceeds $o_zn$, the order reduction procedure proposed in \cite{srmb16} is applied.
\item $C\!Z(o_c,n_c)$, using constrained zonotopes with maximum order $o_c$ and maximum number of constraints $n_c$. The recursive approximation procedure and the order reduction technique are those proposed in \cite{srmb16}. The intersection step in \eqref{eq:cS1} is based on the minimization of the Frobenius norm of the zonotope generator matrix.
\end{itemize}
All operations involving zonotopes and constrained zonotopes have been performed by using the CORA toolbox for Matlab \cite{AltCORA21}.

A first set of experiments is carried out using the algorithm $Par$,  with $d=5$ thresholds. Figures \ref{fig:par_sel}-\ref{fig:par_nosel} show the state estimates provided by the center of the parallelotope (dahed) and the uncertainty interval associated to each state variable (solid). The latter has been obtained by computing the minimum box containing the current approximating parallelotope. The evolution of the true state  variables is shown in red. In Fig.~\ref{fig:par_sel}, the results obtained by adapting the quantizer thresholds according to \eqref{eq:CcknZ}-\eqref{eq:DeltaknZ} are reported. For comparison, Fig.~\ref{fig:par_nosel} shows the same quantities when the quantizer thresholds are kept constant and uniformly distributed in the interval $[-5,5]$ (which is approximately the range of the output signal $z(k)$). It can be clearly seen that the threshold adaptation mechanism is able to significantly reduce the uncertainty associated to the estimates, even for a relatively small number of thresholds.

\begin{figure}[p]
   \centering
   \includegraphics[width=.78\columnwidth]{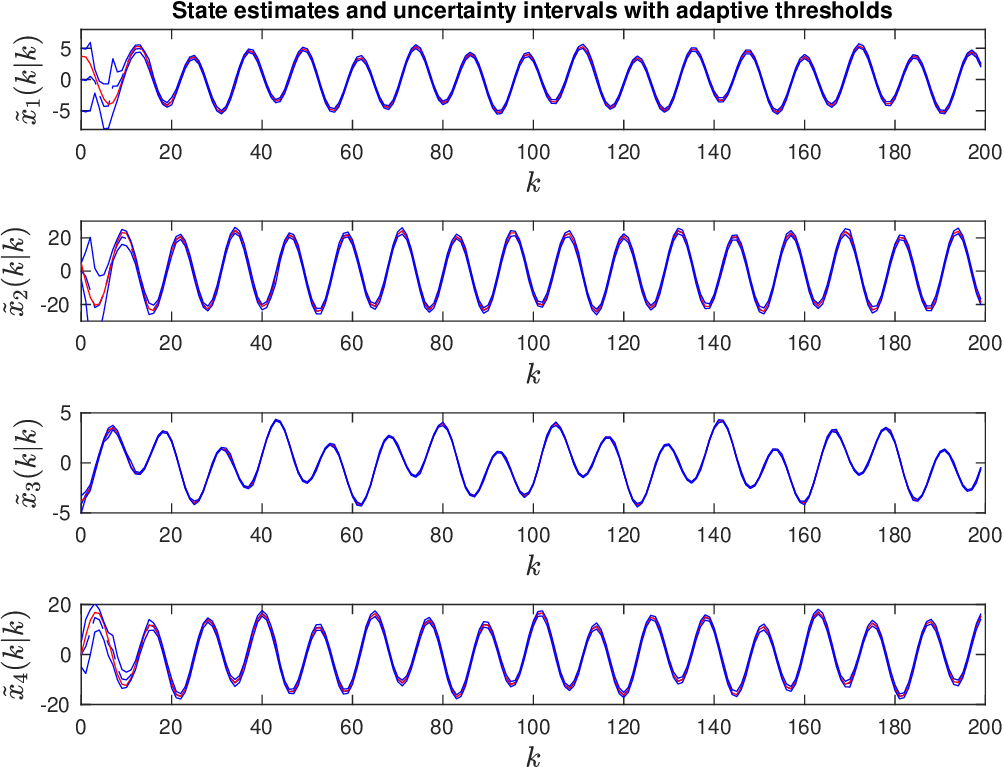}
   \caption{State estimates (dashed) and uncertainty bounds (solid) for the $Par$ algorithm with $d=5$ adaptive thresholds. True states are in red.}
      \label{fig:par_sel}
 \end{figure} 

\begin{figure}[p]
   \centering
   \includegraphics[width=.78\columnwidth]{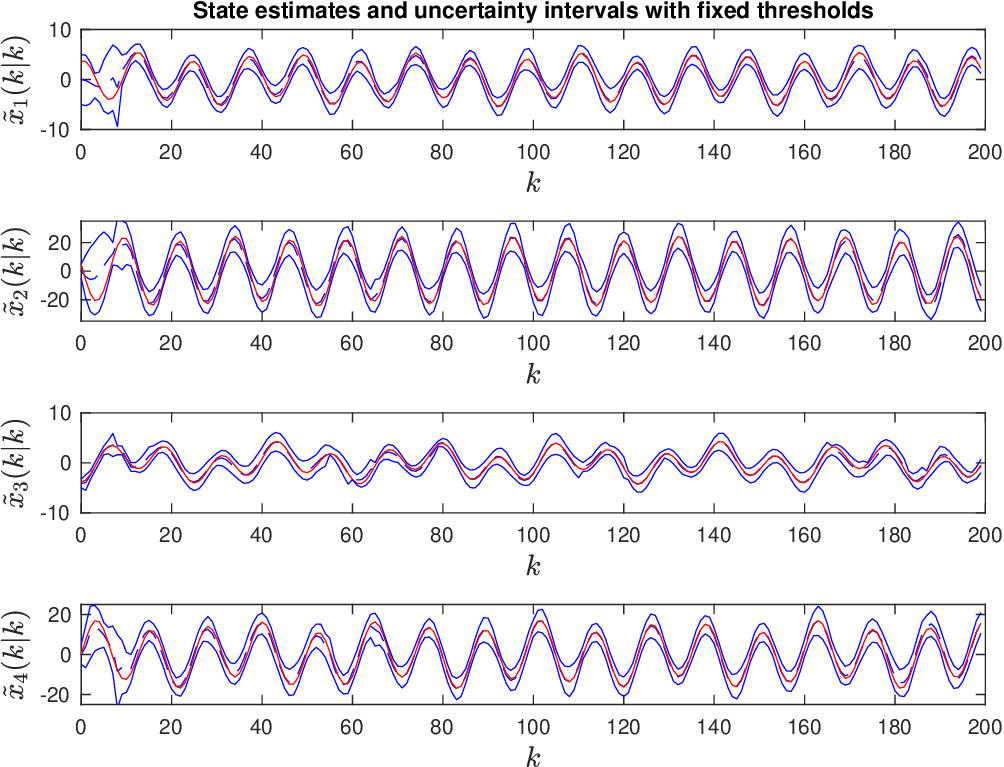}
   \caption{State estimates (dashed) and uncertainty bounds (solid) for the $Par$ algorithm with $d=5$ fixed thresholds. True states are in red.}
         \label{fig:par_nosel}
 \end{figure} 

In order to evaluate the role of the threshold number $d$, the volumes of the approximating parallelotopes and zonotopes are depicted in Fig.~\ref{fig:avg_vol} as a function of $d$. Results are averaged over 10 runs. As expected, the zonotopic approximation provides smaller volumes with respect to the parallelotopic one. The advantage saturates for zonotopes of order $4$, which have 16 generators in $\rr^4$.

\begin{figure}[t]
   \centering
   \includegraphics[width=.78\columnwidth]{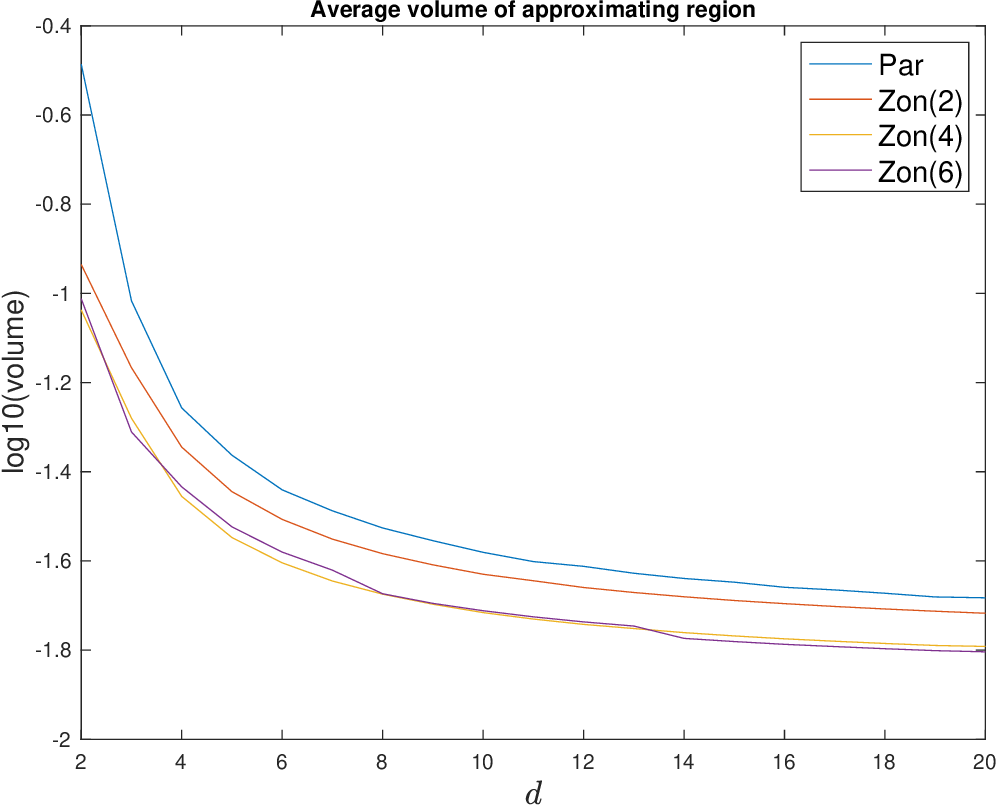}
   \caption{Log-volumes of approximating regions for different number of thresholds $d$.}
      \label{fig:avg_vol}
 \end{figure}

To compare the performance of the different set approximation techniques, we consider the average size of the uncertainty associated to the state estimates. Namely, the radius of the uncertainty interval associated to $x_i(k)$ is averaged over time $k$ and over the state variables $i=1,\dots,4$, for 10 runs involving different initial conditions and disturbance realizations. The time average is performed over the interval $k \in [50,200]$ to remove the effect of the initial transient. The results concerning the considered approximation techniques are reported in Tables \ref{tab:sel}-\ref{tab:nosel}, for different values of $d$. Table \ref{tab:sel} refers to the thresholds adapted according to \eqref{eq:CcknZ}-\eqref{eq:DeltaknZ}, while Table \ref{tab:nosel} concerns the case of fixed thresholds. 
Table \ref{tab:times} reports the relative times for one iteration of each approximation algorithm, with respect to one iteration of $Par$. The average time for one iteration of $Par$ was $0.16 \, ms$ on a computer equipped with Apple M2 CPU and 24 GB of RAM. Clearly, the computational burden of $Zon$ and $CZ$ algorithms can be triggered by acting on the zonotope order and, for $CZ$, on the number of equality constraints, which in the considered tests has been kept equal to 10.

The benefit of the threshold selection mechanism is apparent, not only in terms of uncertainty size, but also for the number of thresholds. Indeed, uncertainty intervals do not significantly decrease for $d>10$ when thresholds are adapted, while they keep decreasing (also for $d>20$) when thresholds are fixed.
Concerning the different techniques, constrained zonotopes yield smaller uncertainty intervals, as expected, at the price of a much higher computational burden. Uncertainties provided by the parallelotopic and zonotopic approximations are quite close, with the former approach providing a good trade-off between quality of the estimates and required computational effort.

\begin{table}[p]
\begin{center}
\begin{tabular}{|c|c|c|c|c|c|c|c|}
\hline 
$d$ & $Par$ & $Zon(2)$ &$Zon(4)$ & $Zon(6)$ & $C\!Z(2,10)$ & $C\!Z(4,10)$ & $C\!Z(6,10)$ \\ 
\hline 
3 & 0.93 & 0.78 & 0.66 & 0.62 & 0.56 & 0.5 & 0.46 \\ 
5 & 0.76 & 0.66 & 0.58 & 0.54 & 0.5 & 0.43 & 0.41 \\ 
10 & 0.69 & 0.59 & 0.52 & 0.49 & 0.41 & 0.35 & 0.32 \\ 
15 & 0.67 & 0.57 & 0.50 & 0.48 & 0.38 & 0.32 & 0.30 \\ 
20 & 0.66 & 0.56 & 0.50 & 0.47 & 0.36 & 0.31 & 0.29 \\ 
\hline 
\end{tabular}
\end{center}
\caption{Average uncertainty associated to the state estimates for different $d$: adaptive thresholds.}
\label{tab:sel}
\end{table}

\begin{table}[p]
\begin{center}
\begin{tabular}{|c|c|c|c|c|c|c|c|}
\hline 
$d$ & $Par$ & $Zon(2)$ &$Zon(4)$ & $Zon(6)$ & $C\!Z(2,10)$ & $C\!Z(4,10)$ & $C\!Z(6,10)$ \\ 
\hline 
3 & 5.68 & 7.55 & 6.41 & 6.37 & 2.43 & 2.28 & 2.24 \\ 
5 & 4.54 & 4.96 & 5.01 & 4.94 & 1.93 & 1.81 & 1.75 \\ 
10 & 2.87 & 3.08 & 3.01 & 2.93 & 1.46 & 1.39 & 1.35 \\ 
15 & 2.27 & 2.35 & 2.21 & 2.15 & 1.19 & 1.11 & 1.07 \\ 
20 & 1.93 & 1.92 & 1.84 & 1.77 & 1.05 & 0.98 & 0.96 \\ 
\hline 
\end{tabular}
\end{center}
\caption{Average uncertainty associated to the state estimates for different $d$: fixed thresholds.}
\label{tab:nosel}
\end{table}

\begin{table}[p]
\begin{center}
\begin{tabular}{|c|c|c|c|c|c|c|}
\hline 
$Par$ & $Zon(2)$ &$Zon(4)$ & $Zon(6)$ & $C\!Z(2,10)$ & $C\!Z(4,10)$ & $C\!Z(6,10)$ \\ 
\hline 
1 & 2.9 & 17.8 & 90.5 & 1879.9 & 3313.8 & 4858.5 \\ 
\hline 
\end{tabular}
\end{center}
\caption{Average relative times for one iteration of each approximation algorithm, with respect to one iteration of $Par$.}
\label{tab:times}
\end{table}

\subsection{Randomly generated multivariable systems}

The adaptive threshold selection procedure has been tested on randomly generated multi-input multi-output linear systems, with $n=5$ states.
The systems are asymptotically stable and fed with $m=3$ input signals $w(k)$ which are generated as sum of sinusoids with randomly chosen frequencies, corrupted by uniformly distributed white process disturbances satisfying \eqref{eq:ubbw} with $\delta_w=0.1$.
Quantized measurements of $3$ output signals $z_i(k)=C_i x(k) +v_i(k)$, $i=1,2,3$ are collected, with uniformly distributed white noise $v_i(k)$ satisfying \eqref{eq:ubbv} with $\delta_v=0.1$.
For each quantized measurement, adaptive thresholds have been selected according to the procedure proposed  in Section \ref{sec:approx}, based  on outer approximation of the feasible state sets. 
In all recursive set approximation algorithms, quantized measurements are processed sequentially, i.e., step \eqref{eq:cS1} is repeated for each feasible measurement set $\cM_i(k)$ corresponding to the quantized measurement $y_i(k)=\sigma(z_i(k))$. 

Tables \ref{tab:ms_sel}-\ref{tab:ms_nosel} report the results of simulations performed on 100 randomly generated systems, each one lasting 100 time instants. The radius of the uncertainty interval associated to state estimates is shown, for different number of thresholds $d$. Results are averaged with respect to time, state variables and tested systems. The employed set approximation techniques are the same as in Section \ref{subsec:2osc}. The order of zonotopes has been limited to $o=4$ because negligible benefits are obtained by increasing such value. For similar reasons $n_c=5$ constraints have been used for constrained zonotopes.
 
Table \ref{tab:ms_sel} refers to the case of adaptive thresholds, while those in Table \ref{tab:ms_nosel} concern a scenario in which thresholds are kept constant. In order to reduce the uncertainty in  the latter scenario, for each randomly generated system the thresholds have been chosen equally spaced in the exact range of the unaccessible output signal $z(k)$. Nevertheless, it can be seen that adaptive thresholds outperform constant ones. On average, the size of state uncertainty is about three times smaller when thresholds are tuned online with respect to when they are fixed. Notice that this happens almost irrespectively of both the number of thresholds and the adopted set approximation technique. The average relative times taken by each algorithm with respect to the $Par$ one are reported in Table \ref{tab:ms_times}. It is confirmed that constrained zonotopes provide the smallest uncertainty, but they are more suitable to offline operations due to the high computational burden. For this study, zonotopes of order $2$ seem to be the best compromise between quality of the state estimates and required  computational effort.
\begin{table}[p]
\begin{center}
\begin{tabular}{|c|c|c|c|c|c|}
\hline 
d & Par & Zon(2) & Zon(4) & CZ(2,5) & CZ(4,5) \\ 
\hline 
3 & 1.44 & 0.75 & 0.77 & 0.44 & 0.42 \\ 
5 & 0.97 & 0.54 & 0.51 & 0.36 & 0.34 \\ 
10 & 0.62 & 0.41 & 0.39 & 0.29 & 0.27 \\ 
15 & 0.53 & 0.37 & 0.35 & 0.27 & 0.25 \\ 
20 & 0.48 & 0.35 & 0.34 & 0.26 & 0.24 \\ 
\hline 
\end{tabular}
\end{center}
\caption{Randomly generated systems: average uncertainty associated to the state estimates for different $d$: adaptive thresholds.}
\label{tab:ms_sel}
\end{table}
\begin{table}[p]
\begin{center}
\begin{tabular}{|c|c|c|c|c|c|}
\hline 
d & Par & Zon(2) & Zon(4) & CZ(2,5) & CZ(4,5) \\ 
\hline 
3 & 5.52 & 1.98 & 1.36 & 1.09 & 0.87 \\ 
5 & 3.08 & 1.62 & 1.22 & 0.97 & 0.81 \\ 
10 & 2.01 & 1.26 & 1.03 & 0.82 & 0.71 \\ 
15 & 1.56 & 1.07 & 0.93 & 0.72 & 0.65 \\ 
20 & 1.35 & 0.97 & 0.87 & 0.67 & 0.61 \\ 
\hline 
\end{tabular}
\end{center}
\caption{Randomly generated systems: average uncertainty associated to the state estimates for different $d$: fixed thresholds.}
\label{tab:ms_nosel}
\end{table}
\begin{table}[p]
\begin{center}
\begin{tabular}{|c|c|c|c|c|}
\hline 
Par & Zon(2) & Zon(4) & CZ(2,5) & CZ(4,5) \\ 
\hline 
1 & 2.9 & 4.8 & 2042.4 & 3521.8 \\ 
\hline 
\end{tabular}
\end{center}
\caption{Randomly generated systems: average relative times for one iteration of approximation algorithms, with respect to one iteration of $Par$.}
\label{tab:ms_times}
\end{table}

\section{Conclusions}
\label{sec:conc}

A new approach to the selection of adaptive thresholds for state estimation with quantized measurements has been proposed. The method relies on the set membership estimation paradigm: thresholds are chosen in such a way to minimize the size of the feasible state set, and hence the uncertainty affecting the state estimates.
It has been shown that a computationally cheap solution can be derived by exploiting set approximations based on parallelotopes or zonotopes. Numerical results show that remarkable uncertainty reductions are achieved with respect of quantizers with fixed thresholds.
Future developments of this research may concern more accurate quantization schemes for multi-output systems, taking into account the geometry of the feasible measurement set. The application of the proposed approach to nonlinear systems will be also investigated.

\appendix
\section{Proof of Proposition \ref{prop1}}
\label{app:prop1}
Let $y\in\cS(p,c)$ and $w\in \rr^m$ such that $w\in\cB_\infty$. Set $x=Ay+\gamma G w$. Then
$$
\begin{array}{l}
|p'A^{-1}x-c|=|p'y-c+\gamma p'A^{-1}G w| \vspace*{1mm}\\
\leq |p'y-c|+ \ds \gamma \sup_{w\in\cB_\infty} |p'A^{-1}G w| \leq 1 + \gamma \|p'A^{-1}G\|_1.
\end{array}
$$
Conversely, let $x$ be such that 
\begin{equation}
|p'A^{-1}x-c |\leq 1 + \gamma \|p'A^{-1}G\|_1.
\label{eq:extstr}
\end{equation}
Set  $y=A^{-1}(x-\gamma G \bar{w})$, for some $\bar{w}$. We need to show that it is always possible to choose $\bar{w}\in\cB_\infty$ so that $y\in\cS(p,c)$, and hence $x=Ay+\gamma G \bar{w} \in A\cS(p,c)+\gamma G \cB_\infty$. Consider the case $p'A^{-1}x-c \geq 0$. Then choose $\bar{w}$ such that
$$
\bar{w}_i=\mbox{sign}\{(p'A^{-1}G)_i\} \frac{\max\{p'A^{-1}x-c-1,0\}}{\gamma \|p'A^{-1}G\|_1}.
$$
for $i=1,\dots,m$, with $(v)_i$ denoting then $i$-th entry of vector $v$. 
From \eqref{eq:extstr}, $\bar{w}\in\cB_\infty$. Moreover,
$$
p'y-c=p'A^{-1}x-c-\gamma  p'A^{-1}G \bar{w} 
=p'A^{-1}x-c - \max\{p'A^{-1}x-c-1,0\} 
$$
which implies $0 \leq p'y-c \leq 1$ and hence $y\in\cS(p,c)$. The case in which $p'A^{-1}x-c \leq 0$ can be treated analogously.
\section{Proof of Theorem \ref{th:thresh}}
\label{app:th_thresh}

In order to prove the theorem, we need to introduce some lemmas.
\begin{lemma}
\label{lem1}
Let \eqref{eq:Cckn}-\eqref{eq:barlbarr} hold. Then, for any value taken by $y(k)$ one has
\begin{equation}
\rad \left( C\Xi(k|k-1)\bigcap C\cM(k) \right)=\frac{1}{2}\Delta(k) + \delta_v.
\label{eq:radint}
\end{equation}
Moreover, the interval $C\Xi(k|k-1)\bigcap C\cM(k)$ does not change if the feasible measurement set \eqref{eq:cMk} is rewritten as
\begin{equation}
\cM(k)=\left\{ x\in\rr^n:~|Cx-q(k)| \leq \frac{1}{2}\Delta(k) + \delta_v \right\}
\label{eq:cMkq}
\end{equation}
for some $q(k)\in\rr$.
\end{lemma}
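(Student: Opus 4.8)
The plan is to reduce everything to a one-dimensional intersection, exploiting that the output projection $C\Xi(k|k-1)$ is the bounded interval $[\bar l(k),\bar r(k)]$ defined by \eqref{eq:barlbarr}; this is the same mechanism used in the proof of Theorem~\ref{th:thm1}, now applied on the output manifold. First I would read off from \eqref{eq:cMk} the three possible shapes of $C\cM(k)$: the half-line $(-\infty,\tau_1+\delta_v]$ for $y(k)=0$, the bounded interval $(\tau_i-\delta_v,\tau_{i+1}+\delta_v]$ for $y(k)=i$ with $1\le i\le d-1$, and the half-line $(\tau_d-\delta_v,+\infty)$ for $y(k)=d$. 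Since the radius of an interval is half its length and is insensitive to the inclusion of endpoints, I can work with closures.

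The natural normalization is to shift coordinates by $\tau_c(k)$, so that $[\bar l(k),\bar r(k)]$ becomes $[-W/2,W/2]$ with $W=\bar r(k)-\bar l(k)$, and by \eqref{eq:thresh1} the thresholds become $\tau_i-\tau_c(k)=-(\tfrac{d+1}{2}-i)\Delta(k)$, with $\Delta(k)=(W-2\delta_v)/(d+1)$ from \eqref{eq:Deltakn}. I would then treat the two regimes. For an interior outcome, the interval $(\tau_i-\delta_v,\tau_{i+1}+\delta_v]$ has length $\Delta(k)+2\delta_v$ regardless of $i$, and the single inequality $W\ge 2\delta_v$ (which is exactly the condition $\Delta(k)\ge0$) shows its endpoints remain inside $[-W/2,W/2]$; hence the intersection is the interval itself, of radius $\tfrac12\Delta(k)+\delta_v$. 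For a boundary outcome, the half-line gets truncated by $[-W/2,W/2]$, and the same inequality $W\ge2\delta_v$ yields $\tau_1+\delta_v\in[-W/2,W/2]$ (symmetrically for $\tau_d-\delta_v$); substituting the expression for $\tau_1$ gives an intersection of length $(W+2d\delta_v)/(d+1)$, half of which coincides with $\tfrac12\Delta(k)+\delta_v$ after simplification. This establishes \eqref{eq:radint} in all cases.

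For the second assertion I would take $q(k)$ to be the center of the intersection interval just computed: $q(k)=(\tau_i+\tau_{i+1})/2$ in the interior case, and the midpoint of $[\bar l(k),\tau_1+\delta_v]$ (resp. $[\tau_d-\delta_v,\bar r(k)]$) in the boundary cases. Because that interval has half-length exactly $\tfrac12\Delta(k)+\delta_v$, the symmetric strip \eqref{eq:cMkq} centered at $q(k)$ produces the same intersection with $[\bar l(k),\bar r(k)]$: in the interior case the strip equals $C\cM(k)$ outright, and in the boundary cases one endpoint of the strip lands exactly on $\bar l(k)$ or $\bar r(k)$, so cutting by the predicted set changes nothing.

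The calculations are routine; the step requiring genuine care is the boundary regime $y(k)\in\{0,d\}$, where $\cM(k)$ is a true half-space and not a strip. The crux is that intersecting this half-space with a predicted set whose output projection is bounded yields a bounded interval whose radius is \emph{forced} to equal $\tfrac12\Delta(k)+\delta_v$ by the choices \eqref{eq:Cckn}-\eqref{eq:Deltakn}; this is precisely what legitimizes the uniform rewriting \eqref{eq:cMkq} and underlies the worst-case optimality invoked in Section~\ref{sec:nth_order}. Throughout I would keep the standing assumption $\Delta(k)>0$, i.e. $\bar r(k)-\bar l(k)>2\delta_v$, on which every containment check rests; the degenerate case would be handled as in Theorem~\ref{th:thm1}.
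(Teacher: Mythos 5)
Your proposal is correct and follows essentially the same route as the paper's proof: you substitute the optimal parameters \eqref{eq:Cckn}--\eqref{eq:Deltakn} into the threshold expressions, observe by case analysis on $y(k)$ that every intersection $C\Xi(k|k-1)\bigcap C\cM(k)$ has the same length $\Delta(k)+2\delta_v$ (the half-space outcomes being truncated by $\bar{l}(k)$ or $\bar{r}(k)$), and then take $q(k)$ to be the center of that interval so that the strip \eqref{eq:cMkq} reproduces the same intersection. The paper carries out the identical computation without your shift by $\tau_c(k)$, expressing the thresholds directly as $\bar{l}(k)+i\Delta(k)+\delta_v$; this is a cosmetic difference only.
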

\begin{proof}
From \eqref{eq:barlbarr} one has that $C\Xi(k|k-1)=[\bar{l}(k),~\bar{r}(k)]$.
By substituting \eqref{eq:thresh1} into \eqref{eq:cMk}, with $\tau_c$ and $\Delta$ given by \eqref{eq:Cckn}-\eqref{eq:Deltakn}, one gets
\begin{equation}
\begin{array}{rll}
\cM(k) =&
 \{ x\in\rr^n~: & Cx  \le \bar{l}(k)+\Delta(k) +2 \delta_v \hspace*{2cm}~\mbox{ if } y(k)=0; \vspace*{1mm}\\
&&  \bar{l}(k)+i \Delta(k)  < Cx  \le \bar{l}(k)+(i+1)\Delta(k) +2 \delta_v \vspace*{1mm}\\
&& \hspace*{3.5cm}~\mbox{ if } y(k)=i,~1\leq i \leq d-1;\vspace*{1mm}\\
&&  Cx> \bar{r}(k) -\Delta(k) - 2\delta_v   \hspace*{1.9cm}~\mbox{ if } y(k)=d \}.
\end{array}
\label{eq:cMk2}
\end{equation}
Then, \eqref{eq:radint} follows by noticing that the interval $C\Xi(k|k-1)\bigcap C\cM(k)$ has always the same length $\Delta(k) + 2\delta_v$, for every possible value taken by $y(k)$.
It is also easy to check that $C\Xi(k|k-1)\bigcap C\cM(k)$ does not change if in \eqref{eq:cMk2} we add the further constraints $Cx \geq \bar{l}(k)$ when $y(k)=0$ and $Cx \leq \bar{r}(k)$ for $y(k)=d$. Hence, $\cM(k)$ in \eqref{eq:cMk2} can be rewritten as in \eqref{eq:cMkq}.
\end{proof}

Consider the feasible measurement set $\cM(k-h)$, for $h=1,2,\dots$. By using \eqref{eq:sys1} iteratively, one can propagate such a set $h$ steps ahead, thus obtaining the set of states $x(k)$ compatible with the measurement $y(k-h)$, which we denote by $\cM(k|k-h)$ and is given by
\begin{equation}
\cM(k|k-h)= A^h\cM(k-h)+ \delta_w \sum_{i=0}^{h-1} A^{h-1-i} G \cB_\infty.
\label{eq:cMpred}
\end{equation}
\begin{lemma}
\label{lem2}
Let \eqref{eq:Cckn}-\eqref{eq:barlbarr} hold.  Then, the $h$-step ahead propagated measurement set in \eqref{eq:cMpred} takes on the form 
\begin{equation}
\label{eq:cMpred2}
\begin{array}{rl}
\cM(k|k-h)=\left\{ x\in\rr^n: \right. & \ds \left| CA^{-h}x-q(k-h) \right|   \\
& \left. \leq \ds \frac{1}{2}\Delta(k-h) + \delta_v + \delta_w \sum_{i=0}^{h-1} \| CA^{-(i+1)} G \|_1 \right\}.
\end{array}
\end{equation}
\end{lemma}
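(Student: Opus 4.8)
The plan is to establish \eqref{eq:cMpred2} by induction on the number of one-step predictions, using Proposition \ref{prop1} as the workhorse. First I would invoke Lemma \ref{lem1} to rewrite the starting set $\cM(k-h)$ as the single strip $\{x:~|Cx-q(k-h)|\le r\}$ with $r=\frac12\Delta(k-h)+\delta_v$; this is the only place where the specific threshold choice \eqref{eq:Cckn}-\eqref{eq:barlbarr} enters, since from here on the argument is purely about how the linear dynamics deform a strip.

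Next I would observe that the $h$-step propagation \eqref{eq:cMpred} factors as $h$ copies of the one-step map $\cV \mapsto A\cV+\delta_w G\cB_\infty$. Concretely, setting $S_0=\cM(k-h)$ and $S_{j+1}=AS_j+\delta_wG\cB_\infty$, a reindexing $j=h-1-i$ shows $S_h=A^h\cM(k-h)+\delta_w\sum_{i=0}^{h-1}A^{h-1-i}G\cB_\infty=\cM(k|k-h)$, so it suffices to track $S_j$ through the recursion.

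The inductive claim is that $S_j=\{x:~|CA^{-j}x-q(k-h)|\le R_j\}$ with $R_j=r+\delta_w\sum_{i=0}^{j-1}\|CA^{-(i+1)}G\|_1$. For the inductive step I would normalize $S_j$ to the unit-bound form $\cS(p,c)$ required by Proposition \ref{prop1}, namely $p'=CA^{-j}/R_j$ and $c=q(k-h)/R_j$, apply the proposition with $\gamma=\delta_w$ (which is legitimate because $A$ is nonsingular by assumption), and then clear the denominator. The proposition replaces $p'$ by $p'A^{-1}=CA^{-(j+1)}/R_j$, leaves the center $q(k-h)$ unchanged, and enlarges the normalized bound from $1$ to $1+\delta_w\|CA^{-(j+1)}G\|_1/R_j$; multiplying back through by $R_j>0$ gives exactly $R_{j+1}=R_j+\delta_w\|CA^{-(j+1)}G\|_1$, which is the claimed recursion with the freshly added term corresponding to $i=j$. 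Evaluating the claim at $j=h$ yields \eqref{eq:cMpred2}.

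I expect the main obstacle to be bookkeeping rather than anything conceptual: keeping the normalization straight when passing between the half-width-$R_j$ strip and the unit strip $\cS(p,c)$ of Proposition \ref{prop1}, and verifying that the two sum reindexings—the one turning \eqref{eq:cMpred} into the iterative recursion, and the one identifying the newly added term $\|CA^{-(j+1)}G\|_1$ with index $i=j$—are mutually consistent. Everything else, such as the positivity of $R_j$ and the symmetry of $\cB_\infty$ (which makes the sign of $A^jG$ irrelevant), is routine.
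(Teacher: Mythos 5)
Your proposal is correct and follows essentially the same route as the paper's proof: normalize $\cM(k-h)$ into a unit strip via Lemma \ref{lem1}, then let Proposition \ref{prop1} absorb the dynamics and the disturbance term, with the $1$-norm terms $\|CA^{-(i+1)}G\|_1$ accumulating in the strip width. The only difference is presentational—you iterate Proposition \ref{prop1} one step at a time with an explicit induction, whereas the paper invokes it directly on the $h$-step expression \eqref{eq:cMpred}; your version just makes the bookkeeping (which the paper leaves implicit) explicit.
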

\begin{proof}
From Lemma \ref{lem1}, $\cM(k-h)$ can be written as in \eqref{eq:cMkq}, 
i.e. as the strip
$$
\cM(k-h)=\cS\left( \frac{C}{\frac{1}{2}\Delta(k-h) + \delta_v}, \frac{q(k-h)}{\frac{1}{2}\Delta(k-h) + \delta_v} \right).
$$
Then, \eqref{eq:cMpred2} follows by applying property \eqref{eq:prop} to \eqref{eq:cMpred}.
\end{proof}
\begin{lemma}
\label{lem3}
Let the sequence $x(k)\in\rr$ be such that
\begin{equation}
x(k) \leq a_1 x(k-1) +a_2 x(k-2) +\dots a_n x(k-n) + b
\label{eq:recx}
\end{equation}
with $a_i \geq 0$, $i=1,\dots, n$ and $b \geq 0$. Let the polynomial $z^n- \sum_{i=1}^n a_i z^{n-i}$ be Schur stable. Then, for any initial values $x(-h)$, $h=1,\dots,n$, one has
\begin{equation}
\limsup_{k \rightarrow +\infty}  x(k) \leq \frac{b}{1-\sum_{i=1}^n a_i}.
\end{equation}
\end{lemma}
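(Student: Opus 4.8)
The plan is to dominate the given sequence by the solution of the corresponding equality recursion and then show that the latter converges to the claimed bound. Before that, I would record a preliminary observation: Schur stability of $z^n-\sum_{i=1}^n a_i z^{n-i}$, together with $a_i\geq 0$, forces $\sum_{i=1}^n a_i<1$, so that the right-hand side is well defined and nonnegative. Indeed, evaluating the polynomial at $z=1$ gives $1-\sum_{i=1}^n a_i$; if this were $\leq 0$, then since the polynomial tends to $+\infty$ as $z\to+\infty$ it would have a real root in $[1,+\infty)$, contradicting Schur stability. Hence $1-\sum_{i=1}^n a_i>0$ and the bound $b/(1-\sum_{i=1}^n a_i)$ is finite and nonnegative.

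Next, I would introduce the comparison sequence $\bar x(k)$ defined by the equality version of \eqref{eq:recx}, namely $\bar x(k)=\sum_{i=1}^n a_i\,\bar x(k-i)+b$ for $k\geq 0$, with matched initial data $\bar x(-h)=x(-h)$ for $h=1,\dots,n$. Since every $a_i\geq 0$, a straightforward strong induction shows $x(k)\leq\bar x(k)$ for all $k\geq 0$: the base case holds with equality on the initial data, and in the inductive step, replacing each $x(k-i)$ by the larger $\bar x(k-i)$ can only increase the right-hand side of \eqref{eq:recx}, so $x(k)\leq\sum_{i=1}^n a_i\,\bar x(k-i)+b=\bar x(k)$.

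It then remains to analyze $\bar x(k)$, which is a linear time-invariant difference equation with constant forcing $b$. Its unique constant (equilibrium) solution is $x^*=b/(1-\sum_{i=1}^n a_i)$, and the deviation $\bar x(k)-x^*$ satisfies the homogeneous recursion whose characteristic polynomial is exactly $z^n-\sum_{i=1}^n a_i z^{n-i}$. Because this polynomial is Schur, every solution of the homogeneous recursion tends to $0$, and therefore $\bar x(k)\to x^*$ as $k\to+\infty$. Combining this with the comparison inequality gives $\limsup_{k\to+\infty}x(k)\leq\lim_{k\to+\infty}\bar x(k)=x^*$, which is precisely the asserted bound.

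The argument is essentially routine, and I do not expect a serious obstacle; the two points deserving care are the monotonicity (comparison) step, which hinges crucially on the sign condition $a_i\geq 0$, and the preliminary deduction that $\sum_{i=1}^n a_i<1$, without which the limiting bound would be meaningless. The decay of the homogeneous part to zero is the standard consequence of Schur stability of the characteristic polynomial, and the passage from the equality recursion back to the original inequality is exactly where only a $\limsup$ (rather than a genuine limit) can be claimed for $x(k)$.
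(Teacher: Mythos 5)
Your proof is correct and follows essentially the same route as the paper's: dominate $x(k)$ by the solution $\bar{x}(k)$ of the equality recursion (using $a_i\geq 0$ for the comparison step), then invoke Schur stability of $z^n-\sum_{i=1}^n a_i z^{n-i}$ to get $\bar{x}(k)\to b/(1-\sum_{i=1}^n a_i)$. Your preliminary observation that Schur stability forces $\sum_{i=1}^n a_i<1$ is a worthwhile detail the paper leaves implicit, but it does not change the nature of the argument.
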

\begin{proof}
Define the sequence $\bar{x}(k)$ such that
\begin{equation}
\bar{x}(k) = a_1 \bar{x}(k-1) +a_2 \bar{x}(k-2) +\dots a_n \bar{x}(k-n) + b
\label{eq:recbarx}
\end{equation} 
with $\bar{x}(-h) \geq x(-h)$, for $h=1,\dots,n$. Let $\psi(k)=\bar{x}(k)-x(k)$. From \eqref{eq:recx} and \eqref{eq:recbarx} one has
$$
\psi(k) \geq a_1 \psi(k-1) +a_2 \psi(k-2) +\dots a_n \psi(k-n).
$$
Being $\psi(-h) \geq 0$, for $h=1,\dots,n$, and $a_i \geq0$, $i=1,\dots,n$, one has $\psi(k) \geq 0$, $\forall k \geq 0$ and hence $x(k) \leq \bar{x}(k)$, $\forall k \geq 0$. 
From asymptotic stability of system \eqref{eq:recbarx}, one gets 
$$
\lim_{k \rightarrow +\infty}  \bar{x}(k) = \frac{b}{1-\sum_{i=1}^n a_i}
$$
from which the result follows.
\end{proof}

We are now ready to prove Theorem \ref{th:thresh}. The main idea is that, at each time $k$, one has
\begin{equation}
\Xi(k|k) \subseteq \bigcap_{h=0}^{n-1} \cM(k|k-h)
\label{eq:bound1}
\end{equation}
where $\cM(k|k)=\cM(k)$ given by \eqref{eq:cMkq}.
By noticing that both \eqref{eq:cMkq} and \eqref{eq:cMpred2} are strips, the right hand side in \eqref{eq:bound1} is a parallelotope. Hence, 
\begin{equation}
\Xi(k|k) \subseteq  \cP(k|k)= \cP(P(k),c(k)), 
\label{eq:bound2}
\end{equation}
where the shape matrix $P(k)$ is given by
\begin{equation}
P(k) = \Lambda(k)^{-1} \Omega
\label{eq:parmat}
\end{equation}
in which $\Lambda(k)=\mbox{diag}([\lambda(k|k) ~ \lambda(k|k-1) ~\dots~\lambda(k|k-n+1)]$, with
\begin{equation}
\lambda(k|k-h)= \ds \frac{1}{2} \Delta(k-h) + \delta_v + \delta_w \sum_{i=0}^{h-1} \| CA^{-(i+1)} G \|_1 ~,~~~ 
\label{eq:dkmh}
\end{equation}
for $h=0,1,\dots,n-1$. By nonsingularity of $A$ and observability of the system, one has that $\Omega$ is nonsingular and therefore the parallelotope $\cP(k|k)$ is bounded for every finite $k$. Moreover, by \eqref{eq:dkmh}, a sufficient condition for $\cP(k|k)$ to be asymptotically bounded is that the sequence $\Delta(k)$ defined by \eqref{eq:Deltakn}-\eqref{eq:barlbarr} is asymptotically upper bounded.
From \eqref{eq:barlbarr}, by using \eqref{eq:fss2} and \eqref{eq:bound2} one has  
\begin{eqnarray}
\bar{l}(k)& \geq &\inf_{x\in\cP(k-1|k-1),w\in\cB_\infty} CAx+\delta_wCGw
\\
\bar{r}(k)& \leq &\sup_{x\in\cP(k-1|k-1),w\in\cB_\infty} CAx+\delta_wCGw
\label{eq:barlbarr2}
\end{eqnarray}
Then, from \eqref{eq:Deltakn}, by applying the property \eqref{eq:projpar}, one gets
\begin{equation}
\Delta(k) \leq \frac{2}{d+1} \left\{ \|CA \,\Omega^{-1} \Lambda(k-1) \|_1 +\delta_w \|CG\|_1-\delta_v\right\}.
\label{eq:boundDelta}
\end{equation}
Now, letting $\alpha=CA\,\Omega^{-1}$ and exploiting \eqref{eq:dkmh}, one gets
\begin{equation}
\begin{array}{rcl}
\Delta(k) &\leq& \ds \frac{2}{d+1} \left\{ \sum_{h=0}^{n-1} |\alpha_{h+1}| \lambda(k-1|k-1-h)  +\delta_w \|CG\|_1-\delta_v\right\} \vspace*{2mm} \\
&=&  \ds \frac{2}{d+1} \left\{  \sum_{h=0}^{n-1} |\alpha_{h+1}|  \left[  \frac{1}{2} \Delta(k-1-h) + \delta_v + \delta_w \sum_{i=0}^{h-1} \| CA^{-(i+1)} G \|_1 \right] \right. \vspace*{1mm}\\
&& \hspace*{1.1cm} \left. +\delta_w \|CG\|_1-\delta_v\right\} \vspace*{2mm} \\
&=&  \ds \frac{1}{d+1}   \sum_{h=1}^{n} |\alpha_{h}| \Delta(k-h) + \bar{\gamma}
\end{array}
\label{eq:boundDelta2}
\end{equation}
with
$$
\bar{\gamma}= \ds \frac{2}{d+1}  \left\{ \delta_v\left(\|\alpha\|_1-1\right) + \delta_w \left( \|CG\|_1+ \sum_{h=1}^n |\alpha_h| \beta_h \right) \right\}
$$
and $\beta_h$ given by \eqref{eq:betah}.
If $d$ is chosen so that condition \eqref{eq:condp} holds, by using Lemma \ref{lem3} one obtains
$$
\limsup_{k \rightarrow +\infty} \Delta(k) \leq \Delta_\infty = \frac{\bar{\gamma}}{1- \frac{1}{d+1}\|\alpha\|_1}
$$
which corresponds to \eqref{eq:Deltainf}.
Then, from \eqref{eq:parmat}-\eqref{eq:dkmh} one gets $\ds\limsup_{k \rightarrow +\infty} \Lambda(k) \leq D_\infty$
given by \eqref{eq:Dinf}. Finally, \eqref{eq:rhoinf} follows from \eqref{eq:bound2} by using \eqref{eq:radpar}.

\bibliographystyle{ieeetran}
\bibliography{/Users/andreagarulli/newworld/Garulli/TEX/Bibtex/sm-FULL}

\begin{thebibliography}{10}
\providecommand{\url}[1]{#1}
\csname url@samestyle\endcsname
\providecommand{\newblock}{\relax}
\providecommand{\bibinfo}[2]{#2}
\providecommand{\BIBentrySTDinterwordspacing}{\spaceskip=0pt\relax}
\providecommand{\BIBentryALTinterwordstretchfactor}{4}
\providecommand{\BIBentryALTinterwordspacing}{\spaceskip=\fontdimen2\font plus
\BIBentryALTinterwordstretchfactor\fontdimen3\font minus
  \fontdimen4\font\relax}
\providecommand{\BIBforeignlanguage}[2]{{%
\expandafter\ifx\csname l@#1\endcsname\relax
\typeout{** WARNING: IEEEtran.bst: No hyphenation pattern has been}%
\typeout{** loaded for the language `#1'. Using the pattern for}%
\typeout{** the default language instead.}%
\else
\language=\csname l@#1\endcsname
\fi
#2}}
\providecommand{\BIBdecl}{\relax}
\BIBdecl

\bibitem{Cur70}
R.~E. Curry, \emph{Estimation and Control with Quantized Measurements}.\hskip
  1em plus 0.5em minus 0.4em\relax MIT Press, 1970.

\bibitem{nfze07}
G.~N. Nair, F.~Fagnani, S.~Zampieri, and R.~J. Evans, ``Feedback control under
  data rate constraints: An overview,'' \emph{Proceedings of the IEEE},
  vol.~95, no.~1, pp. 108--137, 2007.

\bibitem{bl00}
R.~W. Brockett and D.~Liberzon, ``Quantized feedback stabilization of linear
  systems,'' \emph{IEEE transactions on Automatic Control}, vol.~45, no.~7, pp.
  1279--1289, 2000.

\bibitem{em01}
N.~Elia and S.~K. Mitter, ``Stabilization of linear systems with limited
  information,'' \emph{IEEE transactions on Automatic Control}, vol.~46, no.~9,
  pp. 1384--1400, 2001.

\bibitem{ps01}
I.~R. Petersen and A.~Savkin, ``Multi-rate stabilization of multivariable
  discrete-time linear systems via a limited capacity communication channel,''
  in \emph{Proceedings of the 40th IEEE Conference on Decision and Control},
  vol.~1.\hskip 1em plus 0.5em minus 0.4em\relax IEEE, 2001, pp. 304--309.

\bibitem{as08}
S.~Azuma and T.~Sugie, ``Optimal dynamic quantizers for discrete-valued input
  control,'' \emph{Automatica}, vol.~44, no.~2, pp. 396--406, 2008.

\bibitem{wyzz10}
L.~Y. Wang, G.~G. Yin, J.~F. Zhang, and Y.~Zhao, \emph{System Identification
  with Quantized Observations}.\hskip 1em plus 0.5em minus 0.4em\relax
  Springer, 2010.

\bibitem{wkc09}
E.~Weyer, S.~Ko, and M.~C. Campi, ``Finite sample properties of system
  identification with quantized output data,'' in \emph{Proceedings of the 48th
  {IEEE} Conference onDecision and Control, held jointly with the 28th Chinese
  Control Conference}, December 2009, pp. 1532 --1537.

\bibitem{ggamw11}
B.~I. Godoy, G.~C. Goodwin, J.~C. Aguero, D.~Marelli, and T.~Wigren, ``On
  identification of {FIR} systems having quantized output data,''
  \emph{Automatica}, vol.~47, no.~9, pp. 1905 -- 1915, 2011.

\bibitem{cgv12b}
M.~Casini, A.~Garulli, and A.~Vicino, ``Input design in worst-case system
  identification with quantized measurements,'' \emph{Automatica}, vol.~48,
  no.~12, pp. 2997--3007, 2012.

\bibitem{cpr13}
V.~Cerone, D.~Piga, and D.~Regruto, ``Fixed-order {FIR} approximation of linear
  systems from quantized input and output data,'' \emph{Systems \& Control
  Letters}, vol.~62, no.~12, pp. 1136--1142, 2013.

\bibitem{bhp17}
G.~Bottegal, H.~Hjalmarsson, and G.~Pillonetto, ``A new kernel-based approach
  to system identification with quantized output data,'' \emph{Automatica},
  vol.~85, pp. 145--152, 2017.

\bibitem{wb97}
W.~S. Wong and R.~W. Brockett, ``Systems with finite communication bandwidth
  constraints - {Part I}: State estimation problems,'' \emph{IEEE Transactions
  on Automatic Control}, vol.~42, no.~9, pp. 1294--1299, 1997.

\bibitem{sw00}
E.~Sviestins and T.~Wigren, ``Optimal recursive state estimation with quantized
  measurements,'' \emph{IEEE Transactions on Automatic Control}, vol.~45,
  no.~4, pp. 762--767, 2000.

\bibitem{fds09}
M.~Fu and C.~E. de~Souza, ``State estimation for linear discrete-time systems
  using quantized measurements,'' \emph{Automatica}, vol.~45, no.~12, pp.
  2937--2945, 2009.

\bibitem{zwm16}
Y.~Zhang, Z.~Wang, and L.~Ma, ``Variance-constrained state estimation for
  networked multi-rate systems with measurement quantization and probabilistic
  sensor failures,'' \emph{International Journal of Robust and Nonlinear
  Control}, vol.~26, no.~16, pp. 3507--3523, 2016.

\bibitem{agy07}
J.~C. Aguero, G.~C. Goodwin, and J.~I. Yuz, ``System identification using
  quantized data,'' in \emph{46th {IEEE} Conference on Decision and Control},
  December 2007, pp. 4263 --4268.

\bibitem{wyzz08}
L.~Y. Wang, G.~G. Yin, J.~F. Zhang, and Y.~Zhao, ``Space and time complexities
  and sensor threshold selection in quantized identification,''
  \emph{Automatica}, vol.~44, no.~12, pp. 3014 -- 3024, 2008.

\bibitem{tsu09}
K.~Tsumura, ``Optimal quantization of signals for system identification,''
  \emph{IEEE transactions on automatic control}, vol.~54, no.~12, pp.
  2909--2915, 2009.

\bibitem{myf13}
D.~Marelli, K.~You, and M.~Fu, ``Identification of arma models using
  intermittent and quantized output observations,'' \emph{Automatica}, vol.~49,
  no.~2, pp. 360--369, 2013.

\bibitem{you15}
K.~You, ``Recursive algorithms for parameter estimation with adaptive
  quantizer,'' \emph{Automatica}, vol.~52, pp. 192--201, 2015.

\bibitem{pwo01}
H.~C. Papadopoulos, G.~W. Wornell, and A.~V. Oppenheim, ``Sequential signal
  encoding from noisy measurements using quantizers with dynamic bias
  control,'' \emph{IEEE Transactions on information theory}, vol.~47, no.~3,
  pp. 978--1002, 2001.

\bibitem{fl08}
J.~Fang and H.~Li, ``Distributed adaptive quantization for wireless sensor
  networks: From delta modulation to maximum likelihood,'' \emph{IEEE
  Transactions on Signal Processing}, vol.~56, no.~10, pp. 5246--5257, 2008.

\bibitem{dsp15}
J.~M. de~la Rosa, R.~Schreier, K.-P. Pun, and S.~Pavan, ``Next-generation
  delta-sigma converters: Trends and perspectives,'' \emph{IEEE Journal on
  Emerging and Selected Topics in Circuits and Systems}, vol.~5, no.~4, pp.
  484--499, 2015.

\bibitem{S68}
F.~C. Schweppe, ``Recursive state estimation: unknown but bounded errors and
  system inputs,'' \emph{IEEE Transactions on Automatic Control}, vol.~13, pp.
  22--28, 1968.

\bibitem{BR71}
D.~P. Bertsekas and I.~B. Rhodes, ``Recursive state estimation for a
  set-membership description of uncertainty,'' \emph{IEEE Transactions on
  Automatic Control}, vol.~16, pp. 117--128, 1971.

\bibitem{gnkh96}
S.~Gollamudi, S.~Nagaraj, S.~Kapoor, and Y.~Huang, ``Set-membership state
  estimation with optimal bounding ellipsoids,'' in \emph{Proceedings of the
  international symposium on information theory and its applications}, 1996,
  pp. 262--265.

\bibitem{dwp01}
C.~Durieu, E.~Walter, and B.~Polyak, ``Multi-input multi-output ellipsoidal
  state bounding,'' \emph{Journal of optimization theory and applications},
  vol. 111, no.~2, pp. 273--303, 2001.

\bibitem{ec01}
L.~El~Ghaoui and G.~Calafiore, ``Robust filtering for discrete-time systems
  with bounded noise and parametric uncertainty,'' \emph{IEEE Transactions on
  Automatic Control}, vol.~46, no.~7, pp. 1084--1089, 2001.

\bibitem{CGZ96}
L.~Chisci, A.~Garulli, and G.~Zappa, ``Recursive state bounding by
  parallelotopes,'' \emph{Automatica}, vol.~32, pp. 1049--1055, 1996.

\bibitem{abc05}
T.~Alamo, J.~M. Bravo, and E.~F. Camacho, ``Guaranteed state estimation by
  zonotopes,'' \emph{Automatica}, vol.~41, no.~6, pp. 1035--1043, 2005.

\bibitem{Com15}
C.~Combastel, ``Zonotopes and {K}alman observers: Gain optimality under
  distinct uncertainty paradigms and robust convergence,'' \emph{Automatica},
  vol.~55, pp. 265--273, 2015.

\bibitem{wwpc19}
Y.~Wang, Z.~Wang, V.~Puig, and G.~Cembrano, ``Zonotopic set-membership state
  estimation for discrete-time descriptor {LPV} systems,'' \emph{IEEE
  Transactions on Automatic Control}, vol.~64, no.~5, pp. 2092--2099, 2019.

\bibitem{ar21}
M.~Althoff and J.~J. Rath, ``Comparison of guaranteed state estimators for
  linear time-invariant systems,'' \emph{Automatica}, vol. 130, p. 109662,
  2021.

\bibitem{srmb16}
J.~K. Scott, D.~M. Raimondo, G.~R. Marseglia, and R.~D. Braatz, ``Constrained
  zonotopes: A new tool for set-based estimation and fault detection,''
  \emph{Automatica}, vol.~69, pp. 126--136, 2016.

\bibitem{rk22}
V.~Raghuraman and J.~P. Koeln, ``Set operations and order reductions for
  constrained zonotopes,'' \emph{Automatica}, vol. 139, p. 110204, 2022.

\bibitem{rlrr22}
B.~S. Rego, D.~Locatelli, D.~M. Raimondo, and G.~V. Raffo, ``Joint state and
  parameter estimation based on constrained zonotopes,'' \emph{Automatica},
  vol. 142, p. 110425, 2022.

\bibitem{drt22}
A.~A. de~Paula, G.~V. Raffo, and B.~O. Teixeira, ``Zonotopic filtering for
  uncertain nonlinear systems: Fundamentals, implementation aspects, and
  extensions [applications of control],'' \emph{IEEE Control Systems Magazine},
  vol.~42, no.~1, pp. 19--51, 2022.

\bibitem{bcg17}
G.~Battistelli, L.~Chisci, and S.~Gherardini, ``Moving horizon estimation for
  discrete-time linear systems with binary sensors: algorithms and stability
  results,'' \emph{Automatica}, vol.~85, pp. 374--385, 2017.

\bibitem{zcy20}
Y.~Zhang, B.~Chen, and L.~Yu, ``Fusion estimation under binary sensors,''
  \emph{Automatica}, vol. 115, p. 108861, 2020.

\bibitem{hgw04}
H.~Haimovich, G.~C. Goodwin, and J.~S. Welsh, ``Set-valued observers for
  constrained state estimation of discrete-time systems with quantized
  measurements,'' in \emph{2004 5th Asian Control Conference (IEEE Cat. No.
  04EX904)}, vol.~3.\hskip 1em plus 0.5em minus 0.4em\relax IEEE, 2004, pp.
  1937--1945.

\bibitem{zol18}
T.~Zanma, T.~Ohtsuka, and K.-Z. Liu, ``Set-based state estimation in quantized
  state feedback control systems with quantized measurements,'' \emph{IEEE
  Transactions on Control Systems Technology}, vol.~28, no.~2, pp. 550--557,
  2018.

\bibitem{TWW88}
J.~F. Traub, G.~W. Wasilkowski, and H.~Wo\'{z}niakowski,
  \emph{Information-based Complexity}.\hskip 1em plus 0.5em minus 0.4em\relax
  San Diego: Academic Press, 1988.

\bibitem{MV91a}
M.~Milanese and A.~Vicino, ``Estimation theory for nonlinear models and set
  membership uncertainty,'' \emph{Automatica}, vol.~27, pp. 403--408, 1991.

\bibitem{kg98}
I.~Kolmanovsky and E.~G. Gilbert, ``Theory and computation of disturbance
  invariant sets for discrete-time linear systems,'' \emph{Mathematical
  problems in engineering}, vol.~4, no.~4, pp. 317--367, 1998.

\bibitem{bm08}
F.~Blanchini and S.~Miani, \emph{Set-theoretic methods in control}.\hskip 1em
  plus 0.5em minus 0.4em\relax Springer, 2008, vol.~78.

\bibitem{rrsr20}
B.~S. Rego, G.~V. Raffo, J.~K. Scott, and D.~M. Raimondo, ``Guaranteed methods
  based on constrained zonotopes for set-valued state estimation of nonlinear
  discrete-time systems,'' \emph{Automatica}, vol. 111, p. 108614, 2020.

\bibitem{VZ96}
A.~Vicino and G.~Zappa, ``Sequential approximation of feasible parameter sets
  for identification with set membership uncertainty,'' \emph{IEEE Transactions
  on Automatic Control}, vol.~41, pp. 774--785, 1996.

\bibitem{BAC06}
J.~M. Bravo, T.~Alamo, and E.~F. Camacho, ``Bounded error identification of
  systems with time-varying parameters,'' \emph{IEEE Transactions on Automatic
  Control}, vol.~51, no.~7, pp. 1144--1150, 2006.

\bibitem{AltCORA21}
M.~Althoff, ``Guaranteed state estimation in {CORA} 2021,'' in \emph{Proc. of
  the 8th International Workshop on Applied Verification of Continuous and
  Hybrid Systems}, 2021.

\end{thebibliography}

\end{document}